\newcommand{\nc}{\newcommand}
\def\sqr#1#2{{\vcenter{\vbox{\hrule
        height.#2pt \hbox{\vrule width.#2pt height#1pt \kern#2pt
          \vrule width.#2pt} \hrule height.#2pt}}}}
\newtheorem{theorem}{Theorem}[section]
\newtheorem{lemma}{Lemma}[section]
\newtheorem{corollary}{Corollary}[section]
\newtheorem{definition}{Definition}[section]
\newenvironment{proof}{{\sc Proof.}\hspace{3mm}}{\qed}
\nc{\qed}{\hfill $\blacksquare$}
\newcommand{\prob}{\mathbb{P}}
\def\Indic{1 \! \! 1}
\begin{document}

\title{Housing Market Microstructure\footnote{We would like to thank the seminar participants at Syracuse University, Cornell University and also thank the
conference participants for their helpful comments at INFORMS 2007 Annual Meeting, Bachelier Finance Society 2008, and AREUEA 2008.}}
\author{Hazer Inaltekin\thanks{
Princeton University, hinaltek@princeton.edu} \and Robert Jarrow\thanks{Cornell University, Johnson School of Management, raj15@cornell.edu} \and Mehmet Saglam
\thanks{Columbia University, Graduate School of Business, ms3760@columbia.edu} \and Yildiray Yildirim\thanks{Syracuse University, Martin J. Whitman School of Management, yildiray@syr.edu}}
\maketitle

\begin{abstract}
In this article, we develop a model for the evolution of real estate prices.
A wide range of inputs, including stochastic interest rates and changing
demands for the asset, are considered. Maximizing their expected utility,
home owners make optimal sale decisions given these changing market
conditions. Using these optimal sale decisions, we simulate the implied
evolution of housing prices providing insights into the recent subprime
lending crisis.
\end{abstract}


\begin{center}
\textit{\textbf{Keywords:}} Real estate market; Price evolution; Optimal
waiting time
\end{center}


\clearpage

\pagestyle{plain}

\section{Introduction}

The recent turmoil in financial markets triggered by defaults on subprime
mortgages demonstrates that home prices are largely driven by buyers'
demands. Indeed, in the early 2000s, house prices rose to record levels due
to the high demand for mortgages with easy credit. The left-side graph in
Figure \ref{fig:rate_inv} plots the Case-Shiller Composite-20 house-price
index, which shows that the prices peaked in 2006, were stable for a while,
and subsequently decreased. Figure \ref{fig:rate_inv} documents the level of
mortgage rates and home inventories during the same period. During the house
price appreciation cycle (2000-2005), there is a sharp decrease in mortgage
rates and a slight increase in home inventories. Low mortgage rates result
in an increased demand for mortgages and housing. In contrast, the period
with decreasing house prices (2006-2007) corresponds to a sharp increase in
mortgage rates and home inventories. In this paper, we propose a
mathematical model to explain the relation between house prices and buyers'
demands using a market microstructure perspective. We consider both market
and personal shocks, and we analyze how these shocks affect the home owner's
optimal sale decision and the resulting sale process. Our model can also be
used to forecast the future price evolution of house prices under different
interest rate and demand scenarios.

Our paper is related to two topics in the literature: the optimal waiting
time (OWT) and the home sale price evolution. Time-on-the-market (TOM),
time-to-sale, optimal marketing time, and selling time are frequently used
terms that have similar connotations to OWT even though their exact economic
definitions may differ \footnote{%
In our model, OWT is an upper bound for TOM.}. The existing literature in
this area is mostly empirical focusing on the sign of the correlation between
the length of TOM and the resulting sale price. Cubbin (1974) builds an
econometric model to explain the relationship between the list price and
selling time using data from the Coventry housing market between 1968 and
1970. Cubbin finds that the higher the list price, the shorter the selling
time. Miller (1978) cannot confirm Cubbin's conclusion using 83 sales from
Columbus, Ohio. Miller's empirical study does not confirm the existence of
an optimal selling time. Kalra, Chan and Lai (1997) analyze 644
single-family house sales records in the Fargo-Moorhead metropolitan area to
conclude that TOM and sale price are positively related. Genesove and Mayer
(1998) conclude from a study of the Boston condominium market that owners'
with a high loan-to-value ratio have a longer time on the market and sell
properties for higher prices. Taylor (1999) studies the theoretical
relationship between TOM and property quality. Knight (2002) examines how
changes in the list price impact the resulting TOM and sale price. In
contrast, our paper does not explicitly test the relationship between
expected TOM and our model's parameters, but rather we investigate the
relationship between OWT, list, and reservation prices. In addition, we also
examine the relationships between OWT and interest rates, and order arrival
and withdrawal intensities - which are not well-studied in the literature.

\begin{figure}[t]
\begin{center}
$%
\begin{array}{c@{\hspace{-0.5cm}}c@{\hspace{-0.5cm}}c}
\includegraphics[width=6cm]{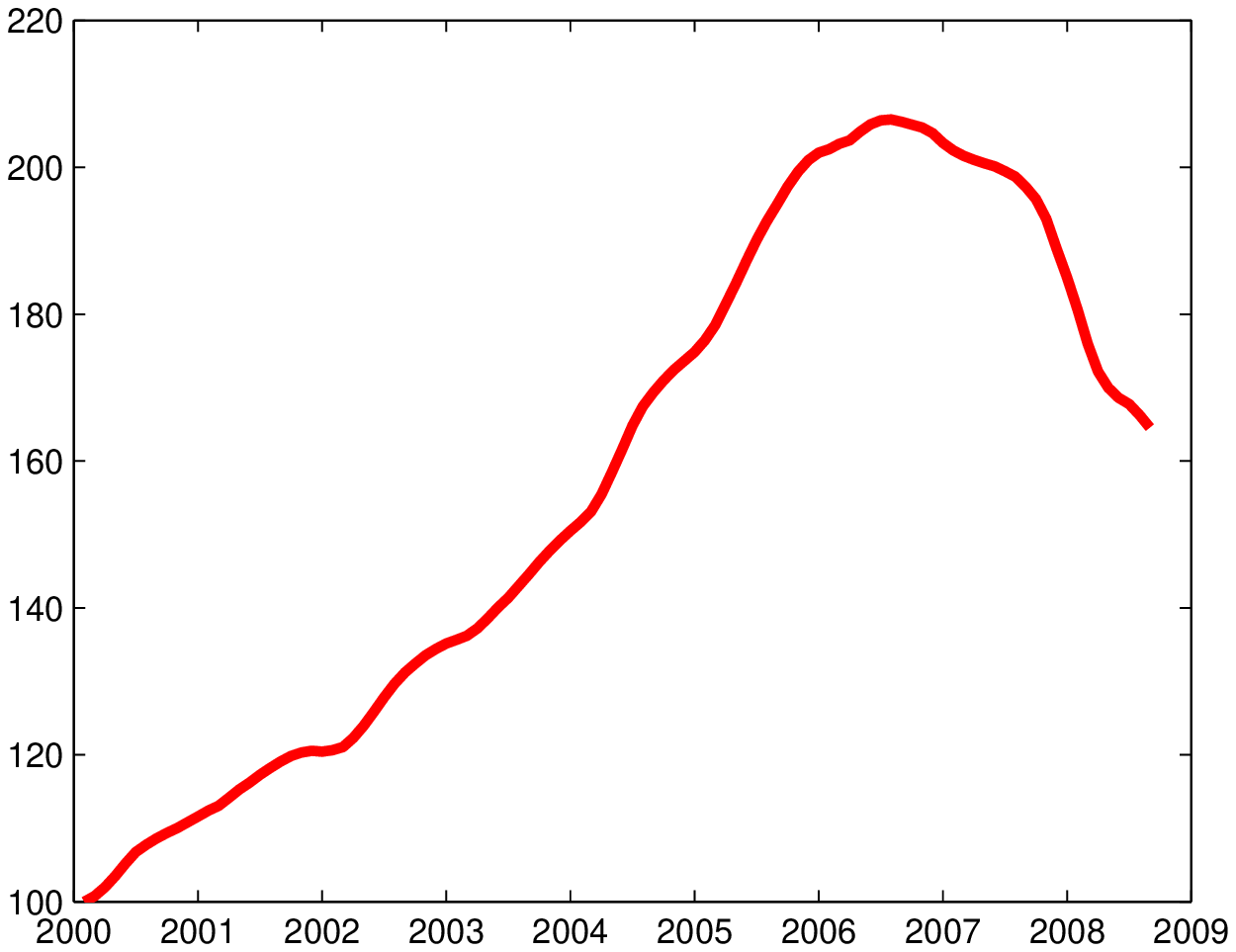} & %
\includegraphics[width=6cm]{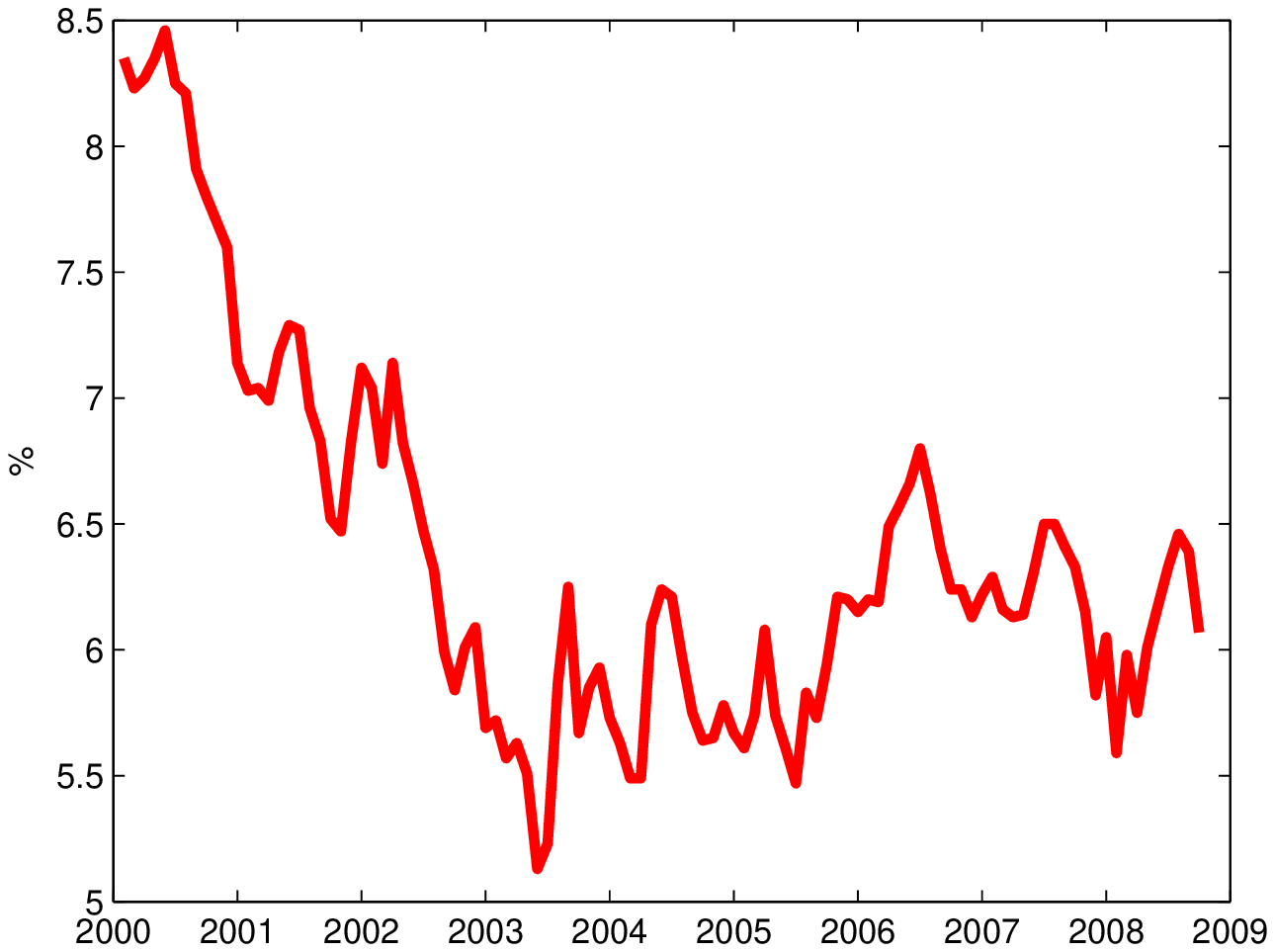} & %
\includegraphics[width=6cm]{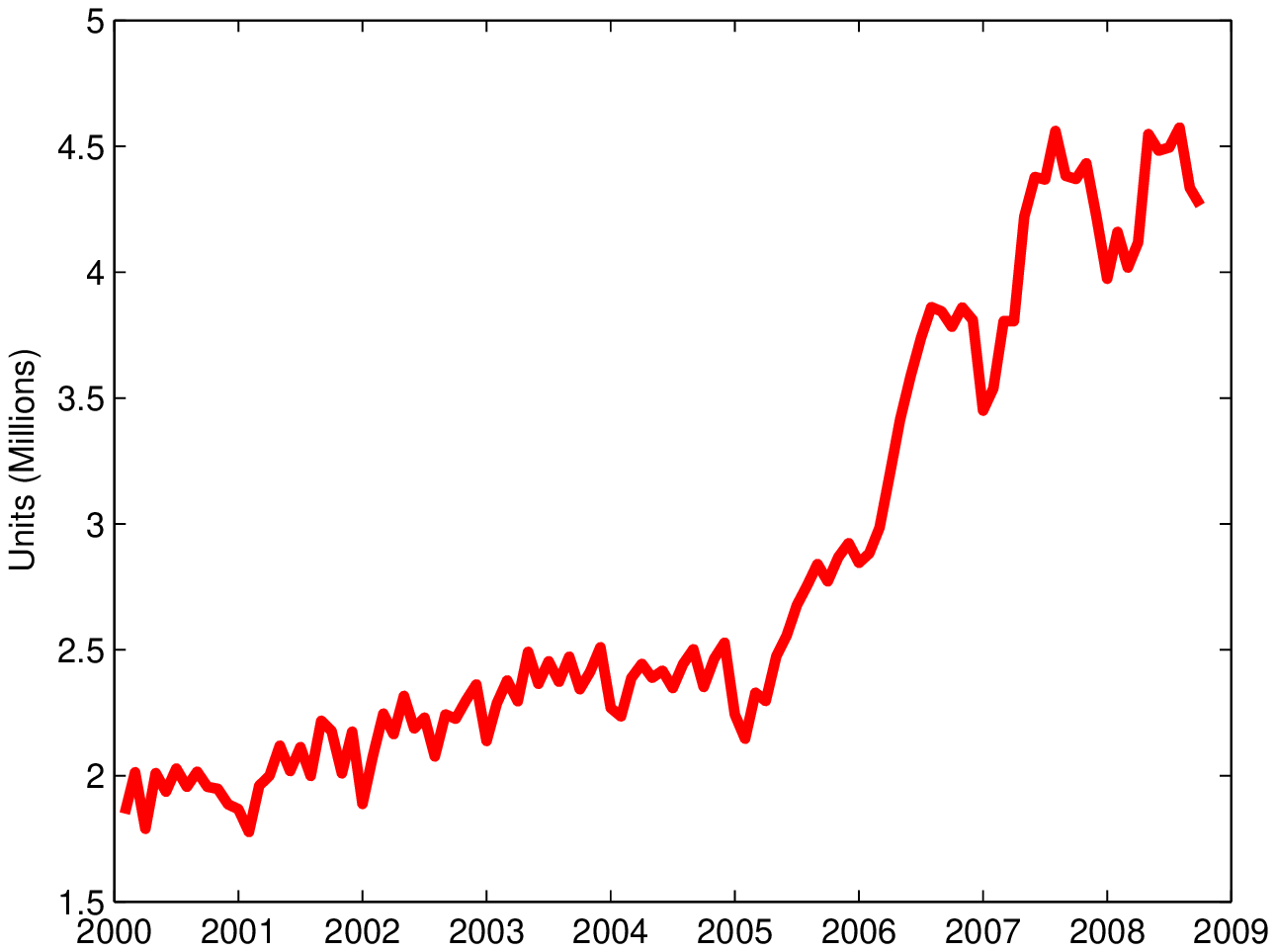} \\
\mbox{\footnotesize Case-Shiller Composite-20 index} &
\mbox{\footnotesize
30-year fixed mortgage rate} & \mbox{\footnotesize Existing home inventory}%
\end{array}%
$%
\end{center}
\caption{With the housing meltdown, the house-price index is falling,
existing inventory is piling up and current mortgage rate is significantly
higher than its trough.}
\label{fig:rate_inv}
\end{figure}


Many of the models studying TOM are based on information theory with search
and matching models. With their corresponding optimal stopping rules, these
models have been used to explain the behavior of buyers and sellers\footnote{%
Earlier search and matching models were used in labor economics, see Lucas
and Prescott (1974).}. Buyers search until the marginal benefit of continued
search is equal to marginal cost, and similarly, sellers equate the marginal
benefit to the marginal cost of locating a bidder for their property. Haurin
(1988) applies search theory to investigate the relationship between the
distribution of offers and the duration of marketing time. His empirical
study concludes that as the variance of the offer distribution increases,
the expected marketing time lengthens. Sarr (1988) examines the optimal list
price adjustment under demand uncertainty. Wheaton (1990) investigates the
role of vacancy rates in determining TOM, the reservation, and sale prices.
He finds that greater vacancy increases selling time, lowers the seller's
reservation price, and ultimately leads to lower market prices. Although we
do not study vacancy rates in our model, our findings are similiar to
Wheaton's (1990) for a seller with a low reservation price. In this case,
OWT increases and the seller expects less from the transaction. Forgey,
Rutherford and Springer incorporate liquidity into a search model. Using
data from $3358$ single-family housing transactions, they conclude that an
optimal marketing period exists and properties with higher liquidity sell at
a higher price. Yavas (1992), Krainer and LeRoy (2002) and Williams (1995)
also apply search and matching theory to analyze the sale prices of illiquid
assets. Our paper differs from these papers in our modeling approach. We do
not use a search and matching model, but instead employ a market
microstructure model. We incorporate the reservation price, the list price,
the distribution of offers, withdrawal rates for buyers' offers, and
deterministic and stochastic interest rates. The seller's motivation is
captured by a time impatient utility function\footnote{%
The existence of such a time impatience parameter is shown by Glower, Haurin
and Hendershott (1998).}. We find the optimal selling time that maximizes
the seller's expected utility. OWT in our model is set at the beginning of
the sales process. Optimal timing of investment has been previously studied
in the case of known asset price dynamics by Grenadier and Wang (2005) and
Evans, Henderson and Hobson (2007). In our paper, we do not study the risk
associated with the waiting period - which is well-documented by Lin and
Vandell (2007).

Real estate market price evolutions are considered in the second part of our
paper. The existing literature on price evolutions relies mostly on
econometric models. Since the housing market is heterogeneous in the
qualities of the properties, most of the existing literature is devoted to
developing statistical techniques to overcome this heterogeneity by forming
price indices for geographical areas, see Bailey, Muth and Nourse (1963),
Case K. and Shiller (1987), Case K. and Shiller (1990), Case B. and Quigley
(1991), and Poterba, Weill and Shiller (1991), Goetzmann and Peng (2006),
and McMillen and Thorsnes (2006). Another stream of literature uses
equilibrium theory to estimate house price dynamics. Stein (1995) explains
the large swings in prices by introducing an equilibrium model with a
down-payment effect. Similarly, Ortalo-Magne and Rady (2006) present a
recursive equilibrium model that accounts for income shocks and credit
constraints. Capozza, Hendershott and Mack (2004) use mean reversion and
serial correlation to explain equilibrium house price dynamics. Our model
differs from this literature. We use the derived analytics from our OWT
framework and extend it to multiple selling periods to determine the price
evolution. Given the linkage between the seller and the buyer (e.g., the
transaction price becomes the reservation price of the buyer when he posts
the property for sale), we construct a property price evolution by tracking
the expected sale price in the selling process. Our OWT framework enables us
to examine housing price movements under different interest rate and demand
scenarios. We model the timing of the decision of sale for each owner with
random income and personal shocks. Simulation shows that the fluctuation in
housing prices are driven by interest rates, demand for the asset, and
reservation prices.

This paper is organized as follow. Section 2 introduces the model within the
OWT framework. We start with an auxiliary model, extend it to more realistic
cases, and analyze the comparative statistics of OWT with respect to model
parameters. Section 3 applies our model to the price evolution in the real
estate market. Section 4 provides a simulation and reflects on the subprime
lending crisis. Finally, Section 5 concludes.

\section{The Model}

Sellers of illiquid real assets often face a difficult decision regarding
how long they should keep the asset on the market if they do not receive any
offers matching the list price. If the asset is highly desirable, the seller
might remain undecided even in the case of receiving an offer at the list
price. He may wait an additional amount of time in hopes of receiving an
offer even greater than the list price. If he chooses to wait longer, he may
lose the current offer. As this discussion implies, the determination of the
waiting time is complicated when there is a mutual decision process between
the seller and the buyer. This section analyzes the optimal amount of time
that a seller should wait in this mutual decision process to maximize his
expected payoff.

We consider two cases in our optimal waiting time (OWT) analysis. In the
first, the seller does not specify his final list price and accepts the
highest available offer exceeding his reservation price, $R$, at the end of
the OWT. In the second, he publicly announces the list price, $L$, and keeps
the reservation price $R$ private. He sells the asset if he receives an
offer greater than $L$; otherwise, he waits until the end of the OWT and
then chooses the best offer greater than $R$. We assume that buyers make
offers at random times with random magnitudes. Buyers may also withdraw
their offers according to a known random process.

We begin with the mathematical formulation of an auxiliary model. We provide
the details of the derivation for this auxiliary model in the appendix. We
then consider the two cases: (i) where the reservation and list prices are
private information, and (ii) where the seller announces a pre-determined
list price.

\subsection{The Auxiliary Model}

In the auxiliary model, the seller's reservation price, denoted $p_{\min }$,
is public information (i.e., all of the offers that the seller receives are
higher than $p_{\min }$). We assume that the arrival times of the buyers'
offers follow a one-dimensional Poisson point process with parameter $%
\lambda $, and the magnitudes of their offers are uniformly distributed with
$U(p_{\min },p_{\max })$ where $p_{\max }$ is finite. After making an offer,
a buyer may withdraw his offer. The time to withdrawal is assumed to follow
an exponential distribution with parameter $\mu $. Lastly, interest rates
are assumed constant and equal to $r$.

The seller maximizes their discounted expected payoff with respect to the
waiting time, $T$. Let the expected discounted payoff function be denoted by $
u(T,\lambda ,\mu ,p_{\min },p_{\max },r)$. The following lemma characterizes
this quantity.

\begin{lemma}
\begin{align}
u(T,\lambda ,\mu ,p_{\min },p_{\max },r)& =\left( -\frac{g(T,\lambda ,\mu
,r)(p_{\max }-p_{\min })}{(1-f(T,\mu ))^{2}}\right) \times  \notag \\
& \left( f(T,\mu )e^{\lambda Tf(T,\mu )}-\frac{1}{\lambda T}(e^{\lambda
Tf(T,\mu )}-1)-e^{\lambda Tf(T,\mu )}+\frac{1}{\lambda T}(e^{\lambda
T}-1)\right)  \notag \\
&+\frac{p_{\max }g(T,\lambda ,\mu ,r)(e^{\lambda T}-e^{\lambda Tf(T,\mu )})}{%
1-f(T,\mu )}
\end{align}%
where $f(T,\mu )=1-\frac{1}{\mu T}(1-e^{-\mu T})$ and $g(T,\lambda ,\mu
,r)=(1-f(T,\mu ))e^{-rT}e^{-\lambda T}$.
\end{lemma}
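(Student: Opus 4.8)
The plan is to recognize the expected discounted payoff as $u = e^{-rT}\,\expect[M]$, where $M$ denotes the highest offer still \emph{active} (i.e.\ not yet withdrawn) at the waiting horizon $T$, with the convention that the payoff is $0$ when no offer is active at $T$. Everything then reduces to describing the random collection of active offers present at time $T$ and computing the expectation of their maximum, after which the discount factor $e^{-rT}$ is applied.

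First I would identify the active offers via the thinning/marking theorem for Poisson processes. An offer arriving at time $s\in[0,T]$ is still present at $T$ precisely when its independent $\mathrm{Exp}(\mu)$ withdrawal time exceeds $T-s$, which occurs with probability $e^{-\mu(T-s)}$. Marking each arrival of the rate-$\lambda$ Poisson process as active or withdrawn independently, the active offers at $T$ form a Poisson random variable $N$ with mean $\Lambda=\lambda\int_0^T e^{-\mu(T-s)}\,ds=\frac{\lambda}{\mu}(1-e^{-\mu T})$. The key bookkeeping identity is that this equals $\lambda T\,(1-f(T,\mu))$, so $1-f(T,\mu)$ is exactly the per-offer survival probability and $e^{-\Lambda}=e^{\lambda T f(T,\mu)}e^{-\lambda T}$; equivalently one may condition on the Poisson$(\lambda T)$ total number of arrivals, place them uniformly on $[0,T]$, and average $e^{-\mu(T-s)}$ to obtain the survival probability $1-f(T,\mu)$. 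Because offer magnitudes are drawn independently of arrival and withdrawal times, each active offer still carries an independent $U(p_{\min},p_{\max})$ magnitude.

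Next I would compute $\expect[M]$ for $M=\max_{i\le N}X_i$ (and $M=0$ if $N=0$), with $X_i$ i.i.d.\ $U(p_{\min},p_{\max})$ and $N\sim\mathrm{Poisson}(\Lambda)$. Two routes work: conditioning on $\{N=n\}$ and using $\expect[\max_{i\le n}X_i]=p_{\min}+\frac{n}{n+1}(p_{\max}-p_{\min})$, then summing the Poisson series (the $\frac{1}{n+1}$ weights generate the $\frac{1}{\Lambda}$, hence eventually the $\frac{1}{\lambda T}$, factors); or integrating the tail $\Prob(M>m)=1-e^{-\Lambda(p_{\max}-m)/(p_{\max}-p_{\min})}$ over $[p_{\min},p_{\max}]$ together with the atom $\Prob(M=0)=e^{-\Lambda}$. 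Either gives $\expect[M]=p_{\min}(1-e^{-\Lambda})+(p_{\max}-p_{\min})\bigl(1-\tfrac{1-e^{-\Lambda}}{\Lambda}\bigr)$.

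Finally I would discount by $e^{-rT}$ and substitute $\Lambda=\lambda T(1-f)$, $e^{-\Lambda}=e^{\lambda T f-\lambda T}$, and $g=(1-f)e^{-rT}e^{-\lambda T}$ to rewrite $u=e^{-rT}\expect[M]$ in the stated form; the $\frac{1}{(1-f)^2}$ and $\frac{1}{1-f}$ prefactors appear exactly when the $\frac{1}{\Lambda}=\frac{1}{\lambda T(1-f)}$ terms are cleared against $g$, and the two exponentials $e^{\lambda T}$ and $e^{\lambda T f}$ are the images of $1$ and $e^{-\Lambda}$ under this substitution. I expect the genuine content to be the thinning step—correctly arguing that the active offers at $T$ are Poisson with mean $\Lambda$ and retain independent uniform magnitudes; once that is established, the remainder is the routine (if lengthy) algebra of matching $e^{-rT}\expect[M]$ to the displayed closed form, which I would verify reduces correctly rather than grind through term by term.
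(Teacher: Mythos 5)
Your proposal is correct, but it takes a genuinely different route from the paper's Appendix A. The paper conditions on the total number of arrivals $N(T)=n$, writes the payoff as a sum over which \emph{ranked} offer is the highest one still active (using order statistics $\mathbb{E}[\xi_{(i)}|N(T)=n]=p_{\min}+\tfrac{i}{n+1}(p_{\max}-p_{\min})$, the per-offer survival probability $1-f$, and withdrawal factors $f^{i-1}$), and then grinds through a double series using derivative-of-geometric-sum identities. You instead invoke the marking/thinning theorem to pass directly to the offers still active at $T$ — a Poisson count with mean $\Lambda=\tfrac{\lambda}{\mu}(1-e^{-\mu T})=\lambda T(1-f)$ carrying i.i.d.\ uniform magnitudes — and compute the expected maximum by tail integration. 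This is sound: magnitudes are independent of arrival and withdrawal times, so the retained points keep i.i.d.\ uniform marks, and your survival probability is exactly the paper's $1-f$. Your route yields the clean intermediate form
\begin{equation*}
u=e^{-rT}\left(p_{\max}-p_{\min}e^{-\Lambda}-(p_{\max}-p_{\min})\frac{1-e^{-\Lambda}}{\Lambda}\right),
\end{equation*}
and the one step you deferred — matching this to the displayed formula — does check out: the lemma's first term collapses to $e^{-rT}(p_{\max}-p_{\min})\bigl(e^{-\Lambda}-\tfrac{1-e^{-\Lambda}}{\Lambda}\bigr)$ and its second term to $p_{\max}e^{-rT}(1-e^{-\Lambda})$ after substituting $g=(1-f)e^{-rT}e^{-\lambda T}$ and $\Lambda=\lambda T(1-f)$, and the sum of these equals the expression above. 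What each approach buys: the paper's is elementary (no appeal to the marking theorem, which the paper only cites later via Resnick for Corollary 2.1) and mirrors the payoff's sequential structure, at the cost of heavy series manipulation; yours is shorter, produces a more interpretable closed form, and generalizes immediately to nonhomogeneous intensities and general offer/withdrawal distributions — indeed it is essentially the same tail-integration strategy the paper itself adopts in Appendix B for the far more general Theorem 3.1.
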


\begin{proof}
See Appendix A.
\end{proof}

This lemma is used in the following two cases.

\subsection{No List Price}

In this case the seller's reservation price, $R$, is private information.
The seller does not post a list price, and considers all offers until the
end of the waiting time. Here the seller has a minimum price for a sale, but
doesn't limit the upside payoff.

Using the thinning principle for Poisson processes, Resnick (1992), the
expected discounted payoff, $v(T,\lambda ,\mu ,R,p_{\min },p_{\max },r)$,
can be computed as in the following corollary.

\begin{corollary}
\label{Corolloary:case2} $v(T,\lambda,\mu,R,p_{\min},p_{\max},r) =
u(T,\lambda_{\text{thinned}},\mu,R,p_{\max},r)$ where $\lambda_{\text{thinned%
}} = \lambda\frac{p_{\max}-R}{p_{\max}-p_{\min}}$.
\end{corollary}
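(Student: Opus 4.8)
The plan is to reduce the no-list-price model to the already-solved auxiliary model by applying the thinning principle for Poisson processes. In the auxiliary model, every arriving offer exceeds the public reservation price $p_{\min}$, so all arrivals are ``relevant'' and the arrival intensity is $\lambda$. In the no-list-price case, the seller's true reservation price is $R \ge p_{\min}$, and any offer falling in the interval $(p_{\min}, R)$ is simply ignored (it can never be accepted). The key observation is that the offers with magnitude exceeding $R$ form, by themselves, a Poisson process, and I would make this precise via thinning.

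First I would recall the setup: offers arrive as a Poisson process of rate $\lambda$, and each offer magnitude is drawn independently and uniformly on $(p_{\min}, p_{\max})$. I would then classify each arriving offer as ``acceptable'' (magnitude $> R$) or ``unacceptable'' (magnitude in $(p_{\min}, R)$), where an offer is acceptable with probability $\Prob(\text{offer} > R) = \frac{p_{\max}-R}{p_{\max}-p_{\min}}$, independently across offers and independently of the arrival times. By the thinning principle (Resnick (1992)), the acceptable offers form a Poisson process with thinned rate $\lambda_{\text{thinned}} = \lambda \frac{p_{\max}-R}{p_{\max}-p_{\min}}$.

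Next I would identify the distribution of the magnitudes of these thinned arrivals. Conditioning on an offer being acceptable, its magnitude is uniform on $(p_{\min},p_{\max})$ restricted to $(R, p_{\max})$, which is exactly the uniform distribution $U(R, p_{\max})$. Crucially, the withdrawal mechanism is unaffected by thinning: each acceptable offer still withdraws after an independent exponential time with parameter $\mu$, and the interest rate $r$ is unchanged. I would also note that unacceptable offers can be discarded entirely, since they lie below $R$ and thus never contribute to the seller's payoff whether or not they are ever withdrawn. Therefore the no-list-price problem with parameters $(\lambda, \mu, R, p_{\min}, p_{\max}, r)$ is \emph{distributionally identical} to an auxiliary model whose public reservation price is $R$, whose offer magnitudes are $U(R, p_{\max})$, whose arrival rate is $\lambda_{\text{thinned}}$, and whose withdrawal and interest parameters are $\mu$ and $r$.

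Finally I would invoke the Lemma, which gives the expected discounted payoff for any auxiliary model in closed form. Substituting the auxiliary parameters $(T, \lambda_{\text{thinned}}, \mu, R, p_{\max}, r)$ into $u$ yields the desired identity $v(T,\lambda,\mu,R,p_{\min},p_{\max},r) = u(T,\lambda_{\text{thinned}},\mu,R,p_{\max},r)$. The main obstacle, such as it is, lies not in heavy computation but in justifying carefully that the thinned process retains \emph{all} the independence structure required by the auxiliary model: namely, that the thinned arrival times are Poisson, that the surviving magnitudes are i.i.d.\ $U(R,p_{\max})$ and independent of the arrival times, and that the per-offer withdrawal clocks remain i.i.d.\ exponential and independent of everything else. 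Once these independence properties are verified, the reduction to the Lemma is immediate and no further calculation is needed.
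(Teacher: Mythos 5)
Your proposal is correct and follows exactly the paper's approach: the paper establishes this corollary by invoking the thinning principle for Poisson processes (citing Resnick (1992)) to reduce the private-reservation-price case to the auxiliary model with public reservation price $R$, offer distribution $U(R,p_{\max})$, and intensity $\lambda_{\text{thinned}} = \lambda\frac{p_{\max}-R}{p_{\max}-p_{\min}}$. Your write-up simply makes explicit the independence-preservation details that the paper leaves implicit.
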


\begin{figure}[!t]
\begin{center}
\includegraphics[width=11cm]{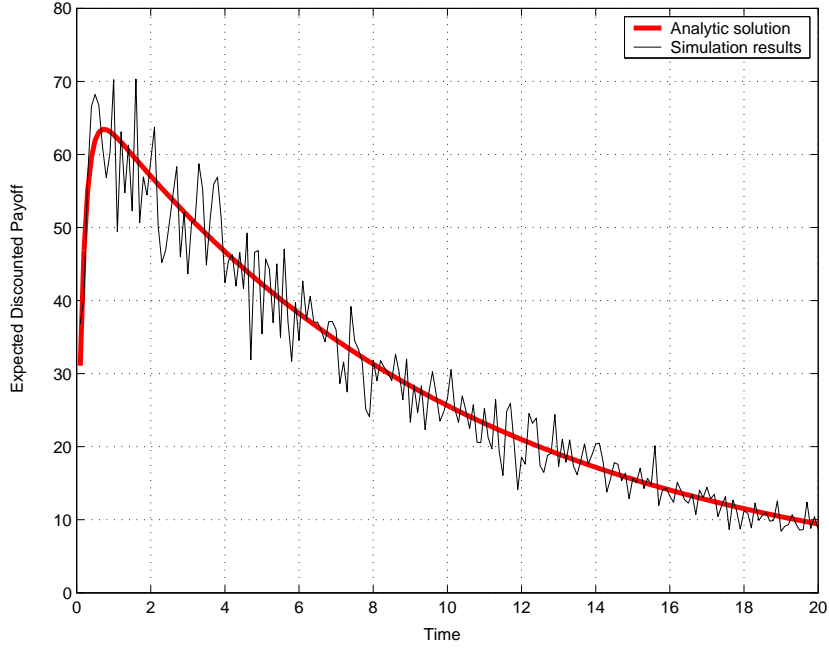} 
\end{center}
\caption{Expected discounted payoff in the case without a list price.}
\label{fig:case2}
\end{figure}

Figure \ref{fig:case2} plots the expected discounted payoff as in Corollary %
\ref{Corolloary:case2}. The values of the parameters used in this figure are
contained in the appendix. The expected discounted payoff equals zero at $T=0
$, increases to a maximum value, and then starts to decrease. This plot
shows that there is an optimal waiting time.

\subsection{Public List Price}

\label{section:case2}

In this case the seller announces a list price, $L$, and still has a private
reservation price, $R$. This case can be simplified by dividing the payoff
function into two regions. From Corollary \ref{Corolloary:case2}, if the
seller does not receive any offers higher than $L$, the payoff becomes $%
u(T,\lambda \frac{L-R}{p_{\max }-p_{\min }},\mu ,R,L,r)$. The remaining
region is when offers higher than $L$ are received. The payoff function, $%
w(T,\lambda ,\mu ,R,L,p_{\min },p_{\max },r)$, is obtained by considering
these two regions.

\begin{theorem}
\label{theorem:case3}
\begin{align}
w(T,\lambda ,\mu ,R,L,p_{\min },p_{\max },r)& =\left( 1-e^{-\lambda
Ty}\right) \left( \frac{p_{\max }+L}{2}\right) \left( \frac{\lambda y}{%
\lambda y+r}\right)  \notag \\
& \qquad +e^{-\lambda Ty}u(T,\lambda x,\mu ,R,L,r)
\end{align}%
where $x=\frac{L-R}{p_{\max }-p_{\min }}$ and $y=\frac{p_{\max }-L}{%
p_{\max}-p_{\min }}$.
\end{theorem}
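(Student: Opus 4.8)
The plan is to decompose the offer stream by magnitude and then condition on whether the seller ever sees an offer exceeding the list price $L$. First I would invoke the thinning principle for Poisson processes, exactly as in Corollary~\ref{Corolloary:case2}, to split the original arrival process (rate $\lambda$, magnitudes uniform on $(p_{\min},p_{\max})$) into three mutually independent Poisson streams: a ``low'' stream of offers below $R$, which are irrelevant and discarded; a ``medium'' stream of offers in $(R,L)$ arriving at rate $\lambda x$ with $x=\frac{L-R}{p_{\max}-p_{\min}}$; and a ``high'' stream of offers in $(L,p_{\max})$ arriving at rate $\lambda y$ with $y=\frac{p_{\max}-L}{p_{\max}-p_{\min}}$. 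Conditional on their magnitudes, medium offers are uniform on $(R,L)$ and high offers are uniform on $(L,p_{\max})$, and the three streams are independent. This separation is the structural fact that decouples the two list-price regions described before the statement.

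Next I would condition on the arrival time $\tau$ of the first high offer, which is exponential with rate $\lambda y$, so $\Prob(\tau>T)=e^{-\lambda y T}$. On the event $\{\tau>T\}$ no offer ever exceeds $L$, so the seller falls back to holding the best medium offer until $T$ under precisely the auxiliary-model dynamics, but now with reservation price $R$, upper support $L$, and arrival rate $\lambda x$. By independence of the medium stream from the high stream, conditioning on $\{\tau>T\}$ leaves its law unchanged, so the Lemma applies verbatim and the continuation payoff on this event is $u(T,\lambda x,\mu,R,L,r)$. This yields the second term $e^{-\lambda y T}u(T,\lambda x,\mu,R,L,r)$.

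On the complementary event $\{\tau\le T\}$ the seller stops and sells at the first high offer. I would first argue that immediate sale is optimal there: any offer above $L$ already dominates the list price, so there is no gain from risking withdrawal or incurring further discounting by waiting. The realized payoff is then $e^{-r\tau}V$, where $V$ is the magnitude of that first high offer; since $V$ is uniform on $(L,p_{\max})$ and independent of $\tau$, its mean $\frac{p_{\max}+L}{2}$ factors out. The remaining task is the discounted first-passage computation $\expect\!\left[e^{-r\tau}\mathbf{1}_{\{\tau\le T\}}\right]$ for the exponential sale time, which must be shown to reduce to the product of the arrival probability $1-e^{-\lambda y T}$ and the discount factor $\frac{\lambda y}{\lambda y+r}$; combining this with the mean offer $\frac{p_{\max}+L}{2}$ produces the first term of the stated formula.

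The main obstacle I anticipate is exactly this last step: correctly evaluating the discounting over the random sale time and justifying the clean factorization into an arrival probability times a discount factor, rather than an entangled expression in $r$, $\lambda y$, and $T$. A secondary point requiring care is verifying that thinning preserves the withdrawal dynamics (governed by $\mu$) within the medium stream, so that the $\{\tau>T\}$ continuation is genuinely an instance of the auxiliary payoff $u$ and the Lemma can be applied without modification.
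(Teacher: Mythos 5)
Your decomposition is structurally the same as the paper's proof: thin the Poisson stream into a medium stream of rate $\lambda x$ and a high stream of rate $\lambda y$, condition on the arrival time $\tau$ (the paper's $\beta$) of the first high offer, and on $\{\tau>T\}$ apply the auxiliary Lemma to the medium stream to obtain $e^{-\lambda y T}u(T,\lambda x,\mu,R,L,r)$; your secondary worry about withdrawals is harmless, since the withdrawal clocks are i.i.d.\ marks independent of the offer magnitudes, so thinning leaves their law untouched. The genuine problem is the step you defer to the end, and it is not a technical wrinkle: the identity you hope to establish is false. The exact computation gives
\begin{equation*}
\expect\left[e^{-r\tau}\Indic_{\{\tau\le T\}}\right]
=\int_0^T \lambda y\, e^{-\lambda y s}\,e^{-rs}\,\mathrm{d}s
=\frac{\lambda y}{\lambda y+r}\left(1-e^{-(\lambda y+r)T}\right),
\end{equation*}
which is strictly larger than $\left(1-e^{-\lambda y T}\right)\frac{\lambda y}{\lambda y+r}$ whenever $r,T>0$: conditioning on $\{\tau\le T\}$ tilts $\tau$ toward smaller values, so the discount factor and the indicator are positively dependent and their expectation cannot factor into the product of the separate expectations. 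Carried out rigorously, your plan therefore proves a different formula, with first term $\frac{p_{\max}+L}{2}\cdot\frac{\lambda y}{\lambda y+r}\left(1-e^{-(\lambda y+r)T}\right)$, not the one in the theorem.

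The paper arrives at the stated expression precisely by not performing this truncated expectation. Its proof computes two separate ingredients --- the probability $\prob\{\tau\le T\}=1-e^{-\lambda y T}$ of receiving a high offer within the waiting window, and the \emph{unconditional} moment generating function $\expect\left[e^{-r\tau}\right]=\frac{\lambda y}{\lambda y+r}$ of an exponential with rate $\lambda y$ --- and multiplies them together with the mean high offer $\frac{p_{\max}+L}{2}$. In other words, it treats the discount factor as if it were independent of the event that the sale occurs before $T$; this decoupling is what the paper is invoking when it later remarks that the discount factor ``does not depend on $T$.'' So to recover the theorem verbatim you would have to adopt that same convention rather than evaluate $\expect\left[e^{-r\tau}\Indic_{\{\tau\le T\}}\right]$ exactly; as a self-contained exact derivation, the final step of your proposal would fail.
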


\begin{proof}
If the seller does not receive any offer higher than $L$, the payoff equals $%
u(T,\lambda \ x,\mu ,R,L,r)$. Given that there is an offer higher than $L$,
our payoff equals $\frac{p_{\max }+L}{2}\mathbb{E}\left[ e^{-r\beta }\right]
$ where $\beta $ is a random variable representing the arrival time of the
first offer greater than $L$. If $y=\frac{p_{\max}-L}{p_{\max }-p_{\min }}$,
then $\mathbb{E}\left[ e^{-r\beta }\right] $ equals the moment-generating
function of an exponential random variable with parameter $\lambda y$. Thus,
$\mathbb{E}\left[ e^{-r\beta }\right] =\frac{\lambda y}{\lambda y+r}$. \ To
find the resulting expected payoff, we only need the probability of
receiving an offer greater than $L$, which equals $\left( 1-e^{-\lambda
Ty}\right) $. As a result, our expected payoff function is the sum of these
two parts multiplied by their corresponding probabilities.
\end{proof}

\begin{figure}[!t]
\begin{center}
\includegraphics[width=11cm]{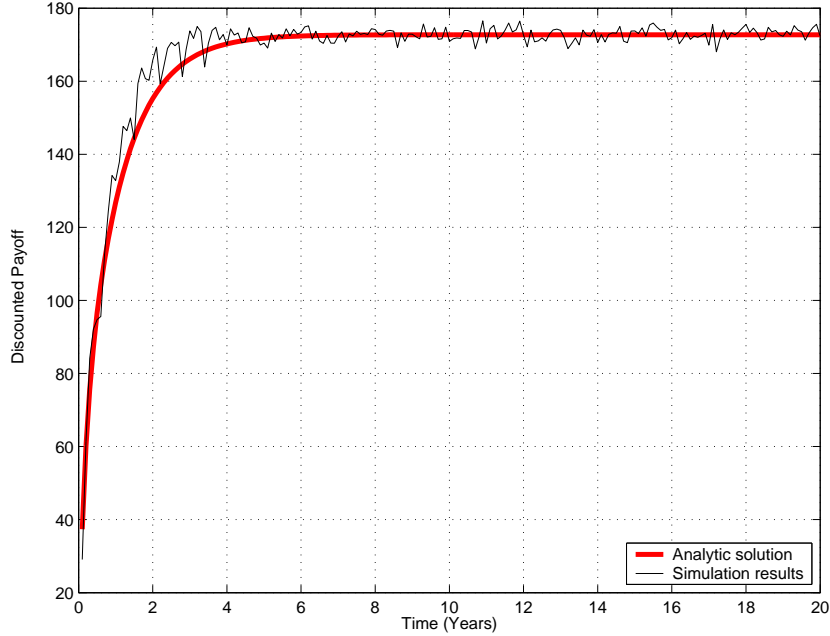} 
\end{center}
\caption{Expected discounted payoff in the case with an announced list price}
\label{fig:case3}
\end{figure}

Figure \ref{fig:case3} graphs the expected discounted payoff with respect to
the waiting time for a public list price. The values of the parameters used
in this graph are contained in the appendix. Note that the expected payoff
function is strictly increasing in $T$ with an asymptote:
\begin{equation}
\lim_{T\rightarrow \infty }w(T,\lambda ,\mu ,R,L,p_{\min },p_{\max
},r)=\left( \frac{p_{\max }+L}{2}\right) \left( \frac{\lambda y}{\lambda y+r}%
\right) .
\end{equation}%
When $T$ increases, the payoff function is dominated by the region where the
seller receives an offer greater than $L$. Unlike the first model, the
asymptotic payoff is not a function of the waiting time, $T$, and therefore,
does not diminish with respect to $T$. By setting a longer waiting time, the
seller can almost surely get an offer greater than $L$ (and because the
discount factor, $\mathbb{E}\left[ e^{-r\beta }\right] $, does not depend on
$T$.) This is due to the properties of the exponential distribution for the
first offer greater than $L$. This figure does not imply that the seller
waits an infinite amount of time to sell the asset. Rather, it only implies
that the seller makes a conservative estimate of the maximum waiting time
before the sale process starts.

The reason why the expected discounted payoff is increasing in $T$ is
because there is no specific utility function associated with the seller's
motivation to sell the asset (see Glower, Haurin and Hendershott (1998) who
show that the seller's motivation is a significant factor in determining the
selling time and the sale price). We can introduce this utility based sales
motivation via a utility function, $\mathcal{U(.)}$, defined as follows.

\begin{definition}
$\mathcal{U}(D,T)=De^{-\gamma T}$ where $D$ is the discounted payoff, $T$ is
the waiting time, and $\gamma \geq 0$ is a time impatience parameter.
\end{definition}

As the time impatient parameter $\gamma $ increases, the seller is more
motivated to sell the asset. This \emph{selling motivation} enables the
model to incorporate different individuals, who have different sale
processes even under the same market conditions. With this assumption, the
expected utility function, $z(.)$, can be written as follows.

\begin{corollary}
\label{theorem:case4}
\begin{align}
z(T,\lambda,\mu,R,L,p_{\min},p_{\max},r,\gamma) &= \mathbb{E}\Big[\mathcal{%
U(.)}\Big] \\
& = e^{-\gamma T}w(T,\lambda,\mu,R,L,p_{\min},p_{\max},r)
\end{align}
\end{corollary}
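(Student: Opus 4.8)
The plan is to compute the expected utility directly from its definition and to exploit the fact that the impatience factor is non-random. By the definition of $\mathcal{U}$ given above, the seller's realized utility is $\mathcal{U}(D,T) = D e^{-\gamma T}$, where $D$ denotes the (random) discounted payoff obtained in the sale process and $T$ is the waiting time. I would therefore begin by writing $z = \mathbb{E}[\mathcal{U}(D,T)] = \mathbb{E}\left[ D e^{-\gamma T}\right]$ straight from the definitions.

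The crucial step is to observe that $T$ is the maximum waiting time the seller commits to at the \emph{beginning} of the sales process, and is thus a deterministic parameter rather than a random stopping time; the only randomness in $\mathcal{U}(D,T)$ resides in $D$. Hence $e^{-\gamma T}$ is constant with respect to the expectation operator, and by linearity I would factor it out to obtain $\mathbb{E}\left[ D e^{-\gamma T}\right] = e^{-\gamma T}\,\mathbb{E}[D]$. It then remains only to identify $\mathbb{E}[D]$ with a previously computed quantity: by Theorem \ref{theorem:case3}, $\mathbb{E}[D] = w(T,\lambda,\mu,R,L,p_{\min},p_{\max},r)$, from which the claimed identity follows at once.

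I expect essentially no computational obstacle; the entire content of the corollary is the structural observation that the time-impatience penalty $e^{-\gamma T}$ separates multiplicatively from the expected discounted payoff. The one point deserving care is the justification that $T$ enters $\mathcal{U}$ deterministically, i.e. that impatience is assessed against the pre-committed waiting horizon rather than against the random arrival time of the accepted offer. If the penalty were applied to that random time instead, the factor would not pull out of the expectation, and I would instead have to evaluate a joint expectation of the discounted payoff together with the penalty. Under the model's stated convention that $T$ is fixed in advance, this difficulty does not arise and the factorization is exact.
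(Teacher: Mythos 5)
Your proof is correct and matches the paper's (implicit) reasoning exactly: the corollary is stated without proof precisely because, with $T$ a deterministic parameter fixed at the outset of the sale process, the factor $e^{-\gamma T}$ pulls out of the expectation by linearity, leaving $\mathbb{E}[D]=w(T,\lambda,\mu,R,L,p_{\min},p_{\max},r)$ from Theorem \ref{theorem:case3}. Your added remark that the factorization would fail if impatience were assessed against the random sale time rather than the pre-committed horizon is a correct and worthwhile clarification of the model's convention.
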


\begin{figure}[!t]
\begin{center}
\includegraphics[width=11cm]{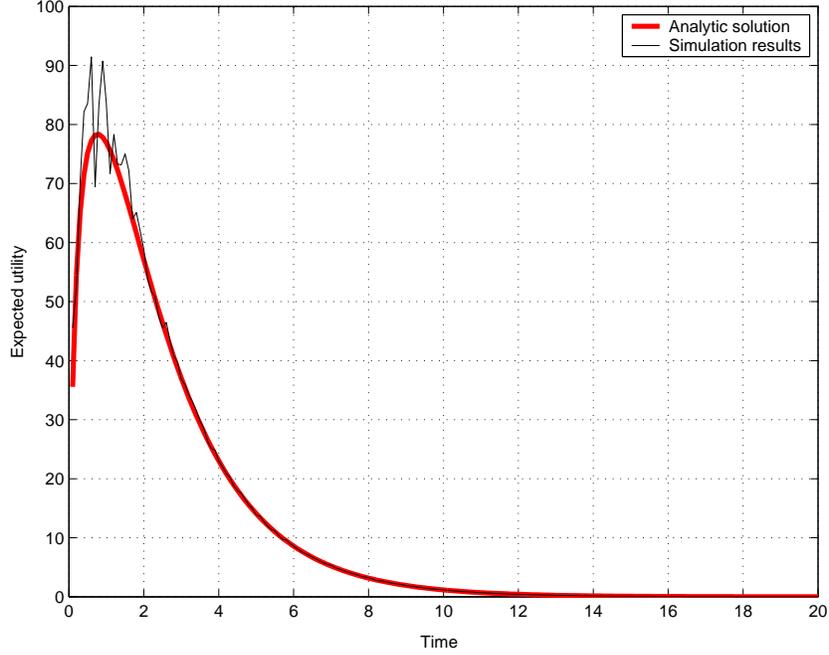} 
\end{center}
\caption{Expected utility in the case of an announced list price}
\label{fig:case3b}
\end{figure}

Figure \ref{fig:case3b} plots the expected utility with respect to the
waiting time in the case of a public list price. The parameter values used
in this graph are contained in the appendix.

\subsection{An Analysis of OWT}

This section defines the OWT and analyzes its comparative statistics with
respect to the model's parameters. We use the public list price case of
Section \ref{section:case2} as our underlying model.

\begin{definition}
Let $T^{\ast }$ denote the OWT that maximizes expected utility. Then,
\begin{equation}
T^{\ast }=\arg {\max {\{T\geq 0:z(T,\lambda ,\mu ,R,L,p_{\min },p_{\max
},r,\gamma )\}}}.
\end{equation}
\end{definition}

Since the expected utility function is continuous and its first derivative
has a unique sign change from positive to negative, a global maximum exists.
We plot $T^{\ast }$ as a function of model's parameters in Figures \ref%
{fig:combined_lambda_mu_r} through \ref{fig:combined_S_R_r}. The parameter
values used in the graphs are contained in the appendix.

\begin{figure}[!t]
\begin{center}
$%
\begin{array}{c@{\hspace{.5cm}}c}
\includegraphics[width=8cm]{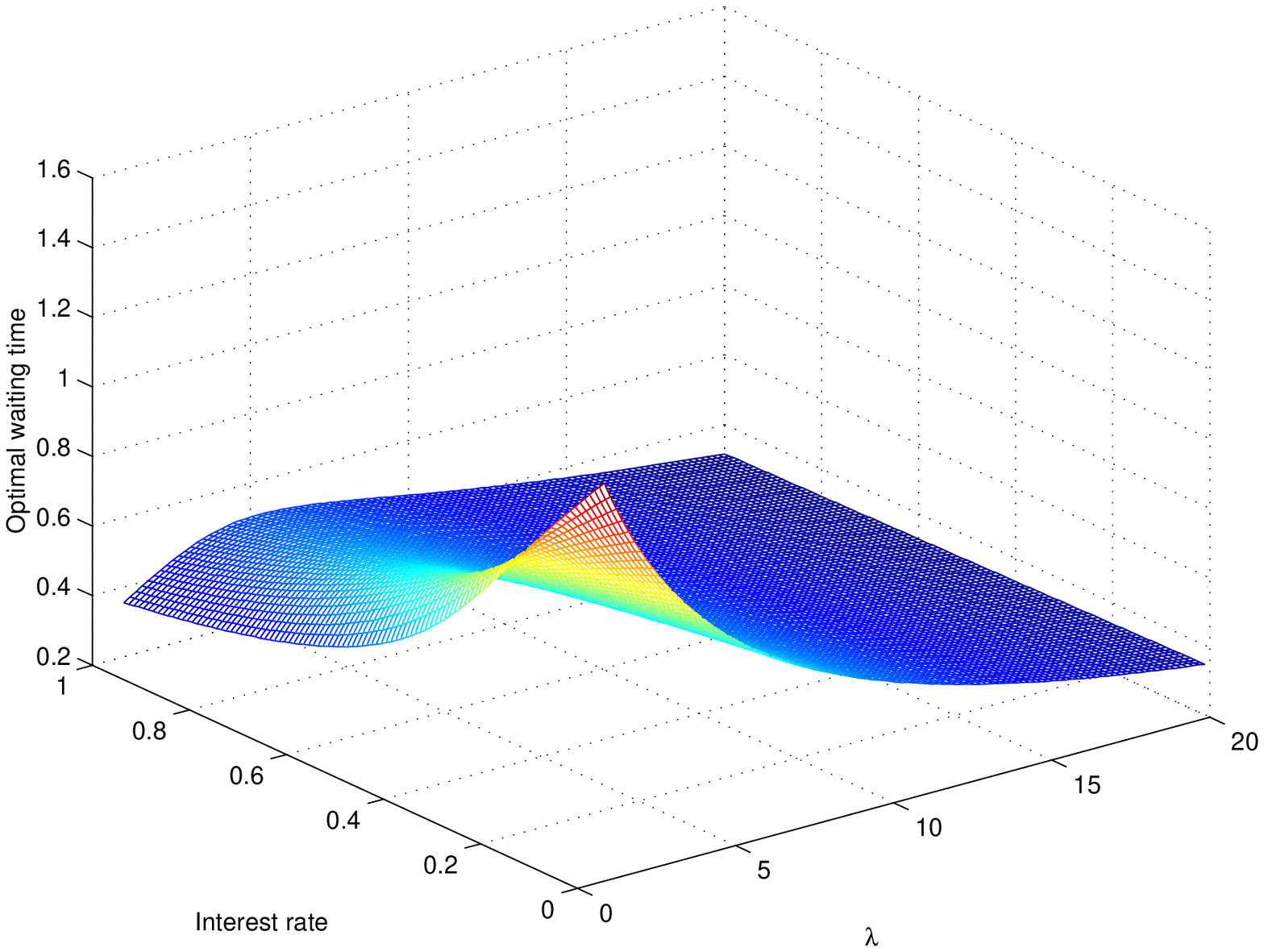} & %
\includegraphics[width=8cm]{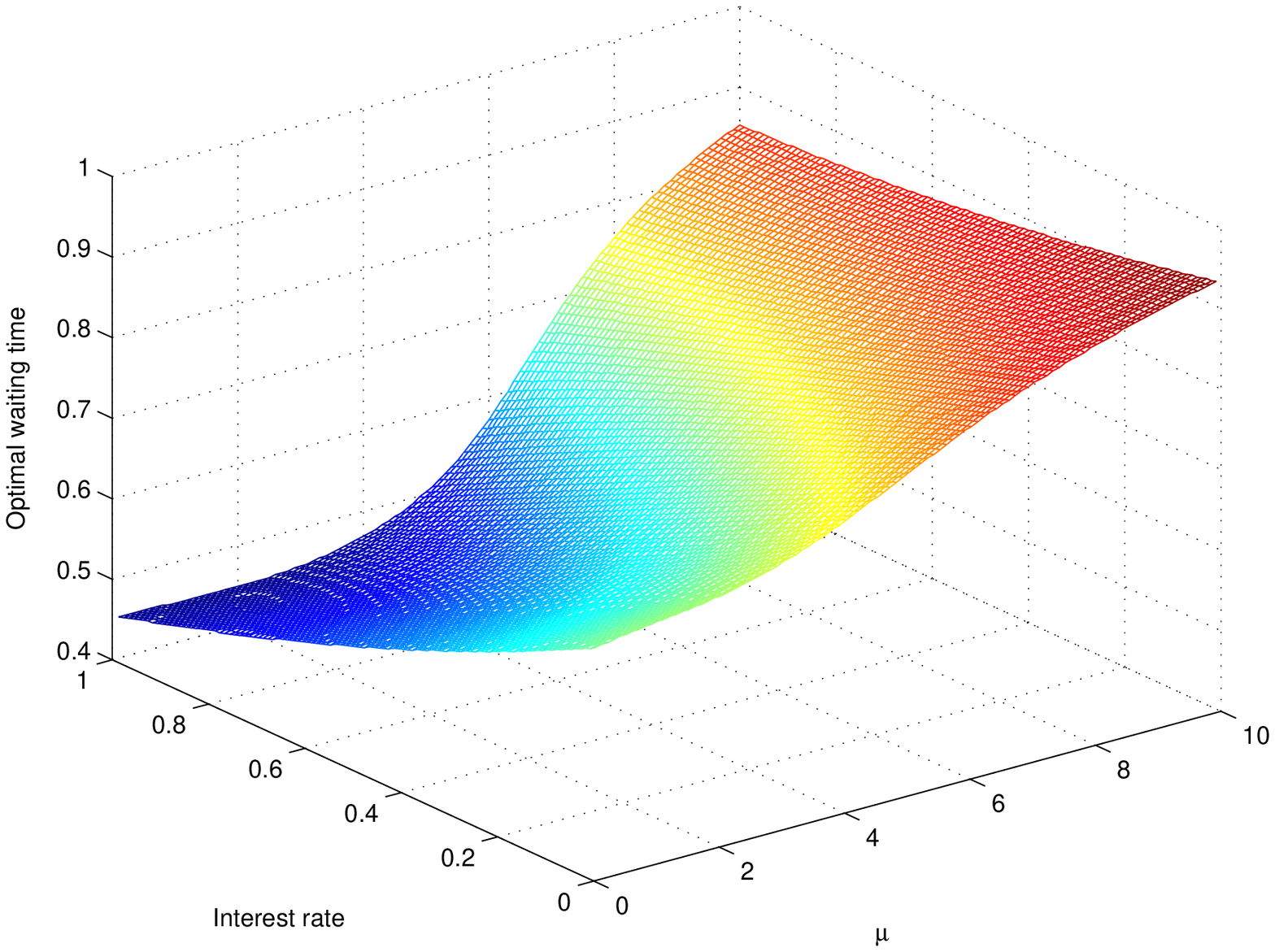} \\
&
\end{array}%
$%
\end{center}
\caption{Optimal waiting time as a function of arrival intensity and
interest rate (left) and optimal waiting time as a function of withdrawal
intensity and interest rate (right)}
\label{fig:combined_lambda_mu_r}
\end{figure}

The graph on the left side of Figure \ref{fig:combined_lambda_mu_r} shows
the change in $T^{\ast }$ with respect to the arrival intensity of offers, $%
\lambda $, and interest rates $r$. As $\lambda $ increases, the seller
chooses a smaller waiting time because he expects to get sufficiently many
offers and, thus, he avoids a large utility loss from waiting longer. This
graph illustrates this point with a decreasing concave-up function when
interest rates are low. However, when interest rates are high, we see that $%
T^{\ast }$ actually increases when $\lambda $ is small and increasing. This
implies that in this region, the amount of additional offers created by
higher demand is worth the wait.

The graph on the right-side graph of Figure \ref{fig:combined_lambda_mu_r}
shows the change in $T^{\ast }$ with respect to the withdrawal intensity of
the buyers, $\mu $, and interest rates $r$. As $\mu $ increases, the seller
chooses a longer waiting time because he wants to increase the number of
offers, as in the case of the increasing arrival intensity. Figure \ref%
{fig:combined_lambda_mu_r} shows that, when interest rates are high, there
is a constant optimal waiting time. In this circumstance, since the discount
rate is high, small changes in the withdrawal intensity do not affect $%
T^{\ast }$.

Both of these graphs show that interest rates are indirectly proportional
with $T^{\ast }$. This is intuitive because the seller does not want to wait
long in a high interest rate environment.

\begin{figure}[tbp]
\begin{center}
\includegraphics[width=11cm]{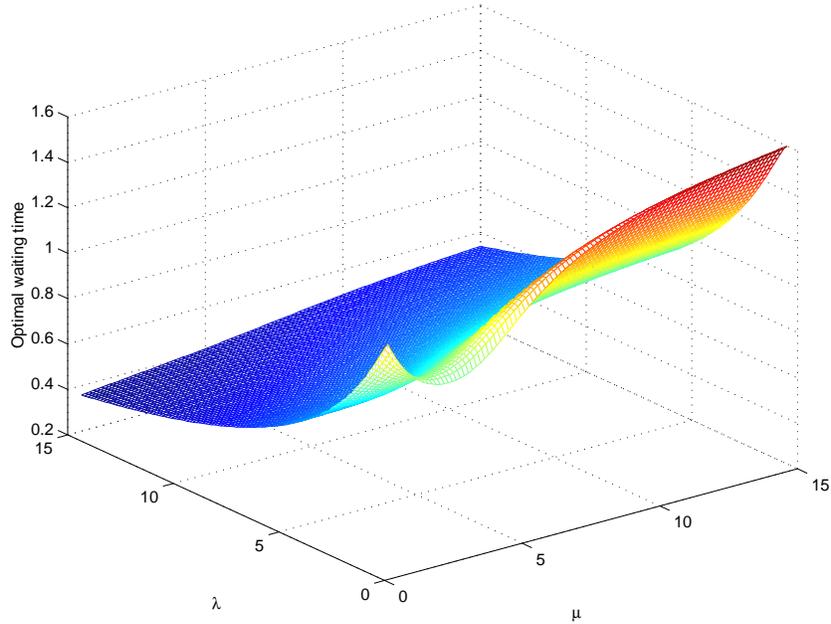} 
\end{center}
\caption{Optimal waiting time as a function of arrival intensity and
withdrawal intensity}
\label{fig:lambda_mu}
\end{figure}

Figure \ref{fig:lambda_mu} illustrates that the intensities of offer
arrivals $\lambda $ and offer withdrawals $\mu $ do not always create the
opposite impact on $T^{\ast }$. For small values of $\lambda $ and $\mu $, $%
T^{\ast }$ increases when they both rise above a certain threshold. Only
after this threshold do they begin to create the opposite effect.

The left-side of Figure \ref{fig:combined_S_R_r} shows the change in $%
T^{\ast }$ with respect to the list price and interest rates. When the list
price increases, the OWT increases. However, as Figure \ref%
{fig:combined_S_R_r} illustrates, if our list price is already high,
increasing the list price further diminishes the probability of receiving an
offer greater than this new value and, thus, there is no incentive to wait
longer.

The right-side of Figure \ref{fig:combined_S_R_r} shows the change in $%
T^{\ast }$ with respect to the reservation price and interest rates. As $R$
increases, the seller waits longer because, using the thinning principle in
Poisson processes, this case implies a decreasing $\lambda $. As we have
seen earlier, decreasing the arrival intensity results in a larger $T^{\ast }
$.

\begin{figure}[tbp]
\begin{center}
$%
\begin{array}{c@{\hspace{.5cm}}c}
\includegraphics[width=8cm]{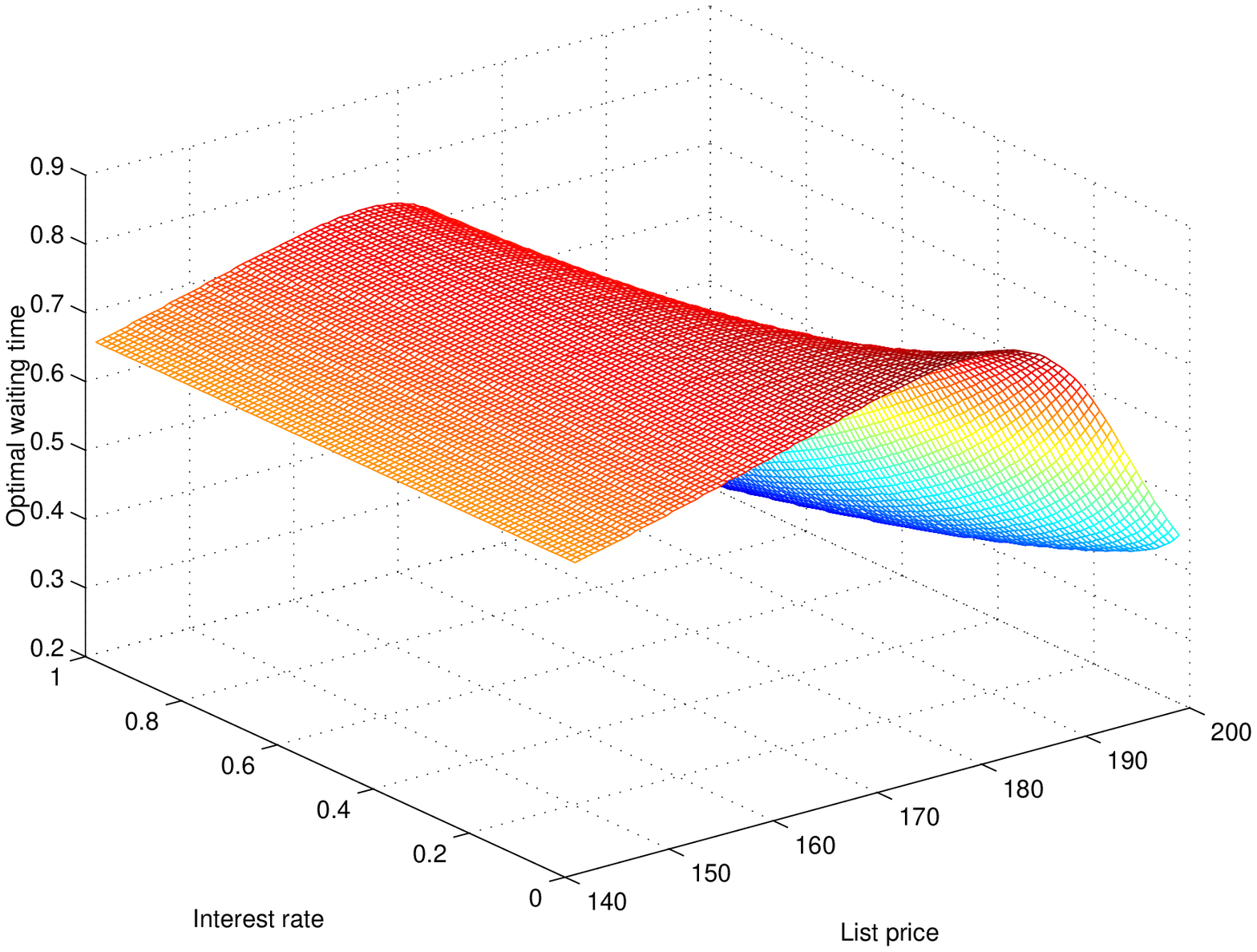} & %
\includegraphics[width=8cm]{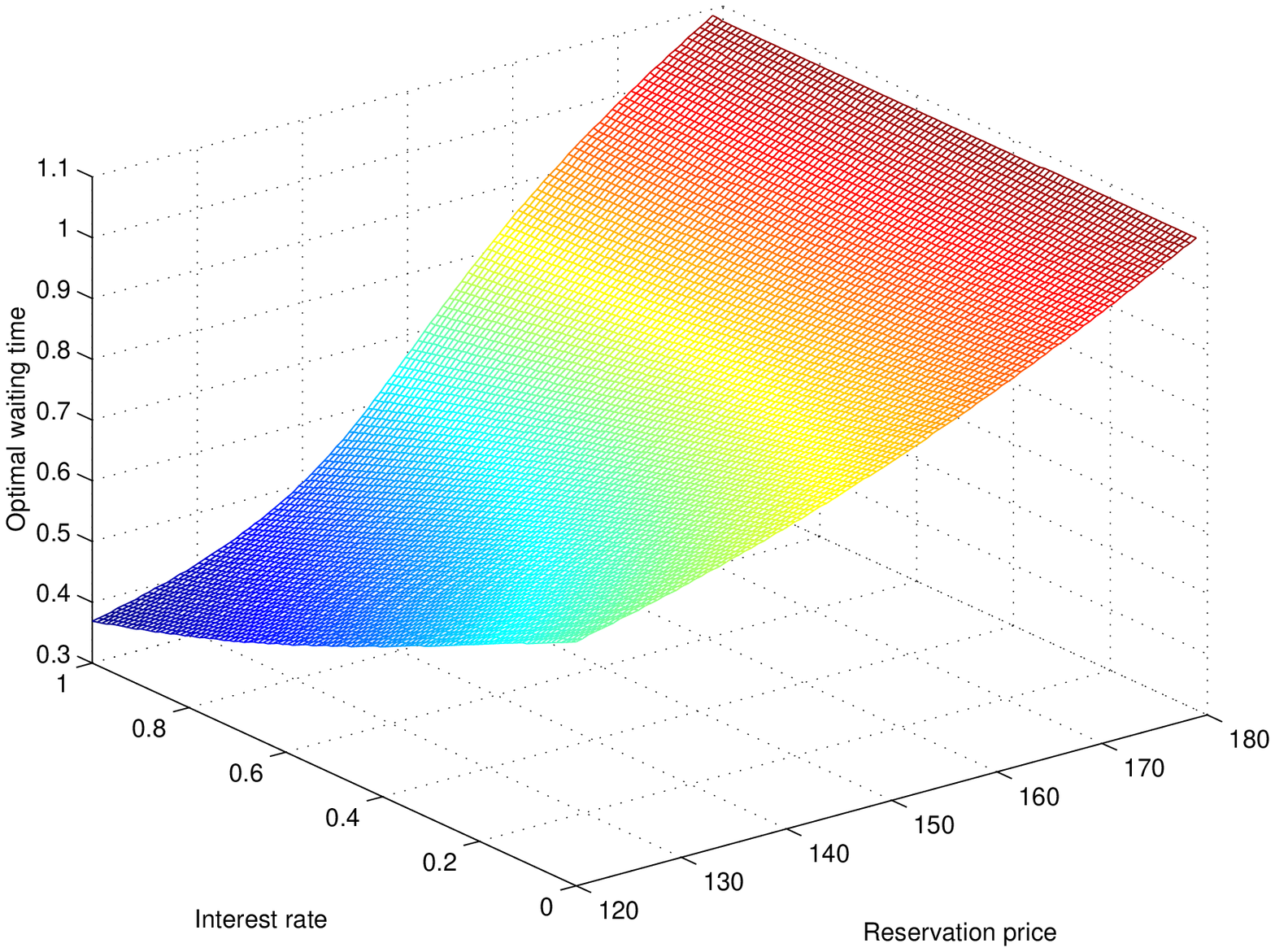} \\
&
\end{array}%
$%
\end{center}
\caption{Optimal waiting time as a function of list price and interest rate
(left) and optimal waiting time as a function of reservation price and
interest rate (right)}
\label{fig:combined_S_R_r}
\end{figure}

\section{Modeling Real Estate Price Evolutions}

The previous sections of the paper characterized the optimal waiting time
when selling an illiquid asset. The remainder of the paper studies the time
evolution of real estate sale prices in a market where sellers maximize
their payoffs by considering the optimal waiting time.

\subsection{Stochastic Demand and Interest Rates}

\label{subsection:macro}

The real estate market price evolution is strongly affected by changes in
the broader economy such as recessionary or expansionary cycles and by
interest rates shocks. In our model, we model the recessionary and
expansionary cycles with a stochastic demand, $\lambda (t)$. The demand, $%
\lambda (t)$, will be a function of interest rates, $r(t)$, and the
announced list prices, $L(t)$, i.e.
\begin{equation}
\lambda (t)=g\big(r(t),L(t)\big)
\end{equation}%
where $r(t)$ is stochastic and $L(t)$ is non-random for the seller.

It is assumed that $L(t)$ is non-increasing with $L(t)\geq R>0$ for all $%
t\geq 0$, where $R$ is the seller's reservation price. Offers arrive
according to a non-homogenous Poisson process with stochastic arrival
intensity, $\lambda (t)$. Let $\mathcal{F}_{t}$ be the $\sigma $-algebra
generated by $\{r(s)\}_{0\leq s\leq t}$, and $N(t)$ be the number of
arrivals in $[0,t]$. Then,

\begin{eqnarray}
\mathbb{P}\{N(t)=n\} &=&\mathbb{E}\Big[\Lambda (t)^{n}\frac{\exp (-\Lambda
(t))}{n!}\Big],\mbox{ and}  \notag \\
\mathbb{P}\{N(t)=n|\mathcal{F}_{t}\} &=&\Lambda (t)^{n}\frac{\exp (-\Lambda
(t))}{n!}
\end{eqnarray}%
where $\Lambda (t)=\int_{0}^{t}\lambda (s)$\textrm{d}$s$.

Offers' values come from an independent distribution, $F_{\xi }$ with density
$f_{\xi }$, where $\xi $ represents the intensity of a generic offer. $\xi
_{i}$ and $A_{i}$, $i\geq 1$, represent the intensity and the arrival time
of the $i$th offer, respectively. Given $\mathcal{F}_{t}$ and the number of
offers in $[0,t]$, offer arrival times are independently distributed over $%
[0,t]$ with conditional density
\begin{equation*}
f_{A|\{\mathcal{F}_{t},N(t)=n\}}(a)=\frac{\lambda (a)}{\Lambda (t)}.
\end{equation*}

After an offer arrives, it is withdrawn after a random time with continuous
distribution function $F_{\tau }$, where $\tau $ is the withdrawal time. It
is not hard, however, to extend the analysis to more general offer
withdrawal time distributions. We let $\tau _{i}$, for $i\geq 1$, represents
the withdrawal time of the $i$th offer. All $\tau _{i}$'s are assumed to be
independent from one another.

Without loss of generality, we set $\xi _{0}=A_{0}=\tau _{0}=0$. Let $\beta $
be the arrival time of the first offer greater than the list price. Then,
\begin{equation*}
\beta =\inf \{A_{i}:\xi _{i}\geq L(A_{i}),i\geq 1\}.
\end{equation*}

We prove three theorems characterizing the expected payoff to the seller. In
the first case, the seller announces a time-dependent list price. In this
case, the asset is sold whenever there is an offer greater than the ask
price. This is the most general case containing the others as subcases. In
the second case, the list price of the asset does not change with time.
Finally, in the third case, the seller does not announce a list price. He
waits an optimal amount of time and then chooses the best available offer
greater than the reservation price.

For the first case, the discounted payoff function, $X(t)$, at time $t$ is:
\begin{eqnarray}
X(t) &=&\exp \Big(-\int_{0}^{t}r(s)ds\Big)\cdot \Big(\max_{0\leq i\leq
N(t)}\xi _{i}\cdot 1\!\!1_{\{\xi _{i}\geq R\}}\cdot 1\!\!1_{\{\tau _{i}\geq
t-A_{i}\}}\Big)\cdot 1\!\!1_{\{\beta >t\}}  \notag \\
&+&\exp \Big(-\int_{0}^{\beta }r(s)ds\Big)\xi _{i(\beta )}1\!\!1_{\{\beta
\leq t\}}  \label{Eqn: Discounted Payoff}
\end{eqnarray}%
where $\xi _{i(\beta )}$ is the offer value at $\beta $. The first term in (%
\ref{Eqn: Discounted Payoff}) accounts for the situation where all offers
until time $t$ are smaller than the list price. The second term corresponds
to the situation where there is an offer price greater than the list price
before time $t$. We have the following theorem for the expected discounted
payoff at time $t$.

\begin{theorem}
\label{Thm: Payoff with changing Sale price} Let $P(t) = \mathbb{E}[X(t)]$
and $P(t | \mathcal{F}_t) = \mathbb{E}[X(t) | \mathcal{F}_t]$. Then, $P(t) =
\mathbb{E}[P(t | \mathcal{F}_t)]$, and
\begin{eqnarray}
P(t | \mathcal{F}_t) &=& \mathrm{e}^{-\int_0^t r(s)\mathrm{d}s} \mathrm{e}%
^{\Lambda(t)(\varphi(t)-1)}\Big(L_0 - \int_0^{L_0} \mathrm{e}^{-\Lambda(t)
\psi(t,y)} \mathrm{d}y \Big)  \notag \\
&+& \big(1-\mathrm{e}^{\Lambda(t)(\varphi(t)-1)}\big) \int_0^t
\int_{L(a)}^\infty \frac{\lambda(a)\exp\big(-\int_0^a r(s)\mathrm{d}s\big)}{%
\Lambda(t)\big(1-F_\xi\big(L(a)\big)\big)} f_\xi(x) x \mathrm{d}x \mathrm{d}a
\end{eqnarray}
where $\psi(t,y) = \frac{1}{\Lambda(t)} \int_0^t \lambda(a) \Big(%
1-F_\tau(t-a)\Big)\Big(F_\xi\big(L(a)\vee y\big) - F_\xi\big(R \vee y\big)%
\Big) \mathrm{d}a$ and $\varphi(t) = \frac{1}{\Lambda(t)} \int_0^t
\lambda(a) F_\xi\big(L(a)\big) \mathrm{d}a$.
\end{theorem}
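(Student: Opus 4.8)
The plan is to condition on $\mathcal{F}_t$ throughout and to de-condition only at the very end, since $P(t)=\mathbb{E}[P(t\mid\mathcal{F}_t)]$ is immediate from the tower property once $P(t\mid\mathcal{F}_t)$ is known. Given $\mathcal{F}_t$, both $r(\cdot)$ and $\lambda(\cdot)$ are deterministic, so the doubly-stochastic arrival process becomes an ordinary inhomogeneous Poisson process and the offers can be viewed as a marked Poisson process on $[0,t]$ with points $\{A_i\}$ of intensity $\lambda(a)$ carrying independent marks $(\xi_i,\tau_i)\sim F_\xi\otimes F_\tau$. I would split $X(t)$ exactly as in (\ref{Eqn: Discounted Payoff}) and evaluate the conditional expectation of each of the two terms separately.

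The key device is the coloring theorem for Poisson processes. Call an offer \emph{high} if $\xi_i\ge L(A_i)$ and \emph{low} otherwise; these produce two \emph{independent} inhomogeneous Poisson processes. The high process has intensity $\lambda(a)\big(1-F_\xi(L(a))\big)$ and total mass $\int_0^t\lambda(a)\big(1-F_\xi(L(a))\big)\,\mathrm{d}a=\Lambda(t)\big(1-\varphi(t)\big)$, so that
\[
\mathbb{P}\{\beta>t\mid\mathcal{F}_t\}=\mathbb{P}\{\text{no high offer in }[0,t]\}=\mathrm{e}^{\Lambda(t)(\varphi(t)-1)},
\]
which supplies the two prefactors $\mathrm{e}^{\Lambda(t)(\varphi(t)-1)}$ and $1-\mathrm{e}^{\Lambda(t)(\varphi(t)-1)}$. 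Since $\{\beta>t\}$ is measurable with respect to the high process alone, while the maximum in the first term depends only on low offers, the coloring independence lets me factor the first term as $\mathrm{e}^{-\int_0^t r(s)\,\mathrm{d}s}\,\mathrm{e}^{\Lambda(t)(\varphi(t)-1)}\,\mathbb{E}[M\mid\mathcal{F}_t]$, where $M$ is the maximum of $\xi_i$ over low offers that are still alive at $t$ (i.e.\ $\tau_i\ge t-A_i$) and satisfy $\xi_i\ge R$, with $M=0$ if there is none.

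Because every low offer obeys $\xi_i<L(A_i)\le L(0)=:L_0$ (using that $L$ is non-increasing), $M$ is bounded by $L_0$, and I would evaluate $\mathbb{E}[M\mid\mathcal{F}_t]$ through the layer-cake identity $\mathbb{E}[M\mid\mathcal{F}_t]=L_0-\int_0^{L_0}\mathbb{P}\{M\le y\mid\mathcal{F}_t\}\,\mathrm{d}y$. The event $\{M\le y\}$ is a void probability: it says no surviving low offer has value exceeding $R\vee y$, and an offer arriving at $a$ is such a point with probability $\big(1-F_\tau(t-a)\big)\big(F_\xi(L(a)\vee y)-F_\xi(R\vee y)\big)$, the joins handling the cases $y<R$, $R\le y<L(a)$, and $y\ge L(a)$ uniformly. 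The resulting Poisson mean is exactly $\Lambda(t)\psi(t,y)$, so $\mathbb{P}\{M\le y\mid\mathcal{F}_t\}=\mathrm{e}^{-\Lambda(t)\psi(t,y)}$, reproducing the first line.

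For the second term I would condition on the first high arrival $\beta=a$ and on its value, which is distributed as $F_\xi$ restricted to $[L(a),\infty)$, giving conditional expected sale value $\frac{1}{1-F_\xi(L(a))}\int_{L(a)}^\infty x f_\xi(x)\,\mathrm{d}x$; integrating this discounted by $\mathrm{e}^{-\int_0^a r(s)\,\mathrm{d}s}$ against the law of the sale time on $[0,t]$, together with the factor $1-\mathrm{e}^{\Lambda(t)(\varphi(t)-1)}=\mathbb{P}\{\beta\le t\}$ and the arrival-time weighting $\lambda(a)/\Lambda(t)$, yields the second line. The main obstacle I anticipate is precisely this last step: pinning down the conditional law of the sale time $\beta$ and coupling it correctly with the stochastic discount factor $\mathrm{e}^{-\int_0^\beta r(s)\,\mathrm{d}s}$, since $\beta$ is a first-passage time of the high process rather than a generic arrival, so care is needed to justify the $\lambda(a)/\Lambda(t)$ weighting; the coloring independence and the boundary behavior of the join terms in $\psi$ are the remaining points that require rigor. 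Taking $\mathbb{E}[\,\cdot\mid\mathcal{F}_t]$-expectations over $\mathcal{F}_t$ then gives $P(t)=\mathbb{E}[P(t\mid\mathcal{F}_t)]$.
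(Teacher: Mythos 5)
Your handling of the first term is correct, and it takes a genuinely different route from the paper. The paper never invokes the coloring theorem: it conditions on $N(t)=n$, uses exchangeability of the arrival times to get $\mathbb{P}\{\beta>t\mid\mathcal{F}_t,N(t)=n\}=\varphi(t)^n$ and $\mathbb{E}[X_1(t)\mid\mathcal{F}_t,N(t)=n]=\mathrm{e}^{-\int_0^t r(s)\mathrm{d}s}\big(L_0\varphi(t)^n-\int_0^{L_0}(\varphi(t)-\psi(t,y))^n\,\mathrm{d}y\big)$, and then resums over $n$ with the generating-function identity $\sum_n p_n(t)q^n=\mathrm{e}^{\Lambda(t)(q-1)}$. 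Your thinning argument reaches the same first line with no summation at all, turning the factorization $\mathbb{E}\big[M(t)\,1\!\!1_{\{\beta>t\}}\mid\mathcal{F}_t\big]=\mathbb{P}\{\beta>t\mid\mathcal{F}_t\}\,\mathbb{E}[M_{\mathrm{low}}\mid\mathcal{F}_t]$ into an instance of independence of the high and low processes; your per-point retention probability $(1-F_\tau(t-a))\big(F_\xi(L(a)\vee y)-F_\xi(R\vee y)\big)$ is exactly the paper's lemma defining $\psi$. For this half, your route is cleaner and fully rigorous.

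The second term is where your proposal has a genuine gap, and it is precisely the step you flag: the $\lambda(a)/\Lambda(t)$ weighting for the law of $\beta$ cannot be justified in your framework, because given $\mathcal{F}_t$ the sale time $\beta$ is the \emph{first} point of the high process, an inhomogeneous Poisson process with intensity $\lambda_H(a)=\lambda(a)\big(1-F_\xi(L(a))\big)$, so on $\{\beta\le t\}$ its conditional density is proportional to $\lambda_H(a)\,\mathrm{e}^{-\int_0^a\lambda_H(u)\mathrm{d}u}$, not to $\lambda(a)$. Carrying your own plan through honestly yields
\begin{equation*}
\mathbb{E}[X_2(t)\mid\mathcal{F}_t]=\int_0^t\lambda(a)\,\mathrm{e}^{-\int_0^a\lambda(u)(1-F_\xi(L(u)))\mathrm{d}u}\,\mathrm{e}^{-\int_0^a r(s)\mathrm{d}s}\int_{L(a)}^\infty x\,f_\xi(x)\,\mathrm{d}x\,\mathrm{d}a,
\end{equation*}
which is not the theorem's second line: already for constant $\lambda$, $L$, $r$ the two expressions are $m\,\frac{\lambda_H}{\lambda_H+r}\big(1-\mathrm{e}^{-(\lambda_H+r)t}\big)$ versus $m\,\big(1-\mathrm{e}^{-\lambda_H t}\big)\frac{1-\mathrm{e}^{-rt}}{rt}$, where $m=\mathbb{E}[\xi\mid\xi\ge L]$; note that only the former is consistent with Theorem 2.1, whose proof explicitly takes $\beta$ to be exponential with the thinned rate $\lambda y$ and obtains the factor $\frac{\lambda y}{\lambda y+r}$. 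The paper produces its second line by a route your plan never passes through: it conditions on $N(t)=n$, invokes its lemma that $f_{\beta\mid\{\mathcal{F}_t,N(t)=n,\beta\le t\}}(s)=\lambda(s)/\Lambda(t)$, and resums over $n$; that lemma is itself argued by treating $A_{i(\beta)}$ as a generic exchangeable arrival time even though $i(\beta)$ is selected as the index of the first above-list offer --- exactly the objection you raise, and one can check directly (e.g., constant $L$, homogeneous $\lambda$, several high offers) that the minimum of the high arrival times is not distributed with density $\lambda(s)/\Lambda(t)$. So your proposal as written does not establish the stated formula; completed rigorously, it produces a different second term, and the mismatch is not a defect of your thinning picture but a sign that the $\lambda(a)/\Lambda(t)$ weighting is the one step of the paper's proof that cannot be recovered from it.
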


\begin{proof}
See Appendix B.
\end{proof}

The difficulty in proving Theorem \ref{Thm: Payoff with changing Sale price}
is that a changing list price introduces a coupling between the offer
intensity and the arrival time.

In the second case, we assume that $L(t)$ is a constant. Here, the seller's
payoff can also be written exactly as in (\ref{Eqn: Discounted Payoff}).
Theorem \ref{Thm: constant list price} gives us the seller's expected payoff
at time $t$ for this case. We do not provide a proof because it is similar
to that of Theorem \ref{Thm: Payoff with changing Sale price}.

\begin{theorem}
\label{Thm: constant list price} Let $P(t) = \mathbb{E}[X(t)]$ and $P(t |
\mathcal{F}_t) = \mathbb{E}[X(t) | \mathcal{F}_t]$. If $L(t)=L\geq R>0$ is
constant, then $P(t) = \mathbb{E}\big[P(t | \mathcal{F}_t)\big]$ and
\begin{eqnarray}
P(t | \mathcal{F}_t) &=& \mathrm{e}^{-\int_0^t r(s)\mathrm{d}s } \mathrm{e}%
^{\Lambda(t)(F_\xi(L)-1)}\Big(L - \int_0^{L} \mathrm{e}^{-\Lambda(t)
\psi(t,y)} \mathrm{d}y \Big)  \notag \\
&+& \big(1-\mathrm{e}^{\Lambda(t)(F_\xi(L)-1)}\big) \int_0^t
\int_{L(a)}^\infty \frac{\lambda(a)\exp\big(-\int_0^a r(s)\mathrm{d}s\big)}{%
\Lambda(t)\big(1-F_\xi\big(L(a)\big)\big)} f_\xi(x) x \mathrm{d}x \mathrm{d}a
\end{eqnarray}
where $\psi(t,y) = \big(F_\xi(L \vee y) - F_\xi(R \vee y)\big)
\big(1-\frac{1}{\Lambda(t)} \int_0^t \lambda(a) F_\tau(t-a) \mathrm{d}a
\big)
$.
\end{theorem}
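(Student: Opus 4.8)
The plan is to obtain Theorem~\ref{Thm: constant list price} as the constant--list--price specialization of Theorem~\ref{Thm: Payoff with changing Sale price}, since the payoff $X(t)$ in~(\ref{Eqn: Discounted Payoff}) is structurally identical in both cases and only the list price $L(a)$ changes. First I would set $L(a)\equiv L$ everywhere in the conditional expression $P(t\mid\mathcal{F}_t)$ of Theorem~\ref{Thm: Payoff with changing Sale price}, and then simplify the two auxiliary functions $\varphi$ and $\psi$, which are the only places where the time dependence of the list price enters.

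For $\varphi$, substituting $L(a)=L$ gives
\begin{equation*}
\varphi(t)=\frac{1}{\Lambda(t)}\int_0^t \lambda(a)F_\xi(L)\,\mathrm{d}a=F_\xi(L),
\end{equation*}
because $F_\xi(L)$ is a constant that factors out and $\int_0^t\lambda(a)\,\mathrm{d}a=\Lambda(t)$. Hence the prefactor $\mathrm{e}^{\Lambda(t)(\varphi(t)-1)}$ collapses to $\mathrm{e}^{\Lambda(t)(F_\xi(L)-1)}$, and the upper limit $L_0$ (which in the general theorem is $\sup_a L(a)=L(0)$ since $L$ is non--increasing) becomes $L$. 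For $\psi$, the key observation is that with a constant list price the value--gap factor $F_\xi(L\vee y)-F_\xi(R\vee y)$ no longer depends on the arrival time $a$ and therefore pulls out of the integral, leaving
\begin{equation*}
\psi(t,y)=\big(F_\xi(L\vee y)-F_\xi(R\vee y)\big)\cdot\frac{1}{\Lambda(t)}\int_0^t\lambda(a)\big(1-F_\tau(t-a)\big)\,\mathrm{d}a,
\end{equation*}
and using $\int_0^t\lambda(a)\,\mathrm{d}a=\Lambda(t)$ once more rewrites the averaging factor as $1-\frac{1}{\Lambda(t)}\int_0^t\lambda(a)F_\tau(t-a)\,\mathrm{d}a$. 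The second (sale--at--list--price) line of the payoff is unaffected by taking $L$ constant, so collecting terms reproduces the stated formula.

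To make the argument self--contained I would also reprove it directly, mirroring the Appendix~B computation. After conditioning on $\mathcal{F}_t$ I would split $X(t)$ at the event $\{\beta\le t\}$. On $\{\beta>t\}$ no offer exceeds $L$, and the best acceptable active offer is $M=\max_{0\le i\le N(t)}\xi_i\Indic_{\{\xi_i\ge R\}}\Indic_{\{\tau_i\ge t-A_i\}}$; its contribution is $\mathrm{e}^{-\int_0^t r(s)\mathrm{d}s}\,\mathbb{E}[M\,\Indic_{\{\beta>t\}}\mid\mathcal{F}_t]=\mathrm{e}^{-\int_0^t r(s)\mathrm{d}s}\int_0^\infty \mathbb{P}(M>y,\,\beta>t\mid\mathcal{F}_t)\,\mathrm{d}y$ by the layer--cake identity. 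The joint event $\{M>y,\,\beta>t\}$ requires no offer of value $\ge L$ (contributing $\mathrm{e}^{\Lambda(t)(F_\xi(L)-1)}$ by thinning of the ``large'' offers) together with at least one active offer of value in $(R\vee y,\,L)$ (contributing $1-\mathrm{e}^{-\Lambda(t)\psi(t,y)}$); since $M<L$ on $\{\beta>t\}$ the $y$--integration runs only over $[0,L]$ and yields $\mathrm{e}^{\Lambda(t)(F_\xi(L)-1)}\big(L-\int_0^L \mathrm{e}^{-\Lambda(t)\psi(t,y)}\,\mathrm{d}y\big)$, i.e.\ the first line. On $\{\beta\le t\}$ I would compute the expected discounted value realized at the first large offer, combining $\mathbb{P}(\beta\le t\mid\mathcal{F}_t)=1-\mathrm{e}^{\Lambda(t)(F_\xi(L)-1)}$, the conditional mean $\frac{1}{1-F_\xi(L)}\int_L^\infty x f_\xi(x)\,\mathrm{d}x$ of a large offer, and the discount averaged against the arrival density $\lambda(a)/\Lambda(t)$, producing the second line.

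The main obstacle lives entirely in the general Theorem~\ref{Thm: Payoff with changing Sale price}, where the time--varying $L(a)$ couples the offer--value distribution to the arrival time and prevents the value--gap factor in $\psi$ from leaving the integral; the delicate steps there are justifying the interchange in the layer--cake formula and the independence of the disjoint value--thinned Poisson processes. For the constant--$L$ case this coupling disappears, so the expected--maximum computation factorizes cleanly and the remaining work is the routine bookkeeping of the specialization described above.
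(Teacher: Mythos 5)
Your proof is correct relative to the paper, and it takes a genuinely more economical route: the paper gives no proof of this theorem at all, remarking only that the argument is ``similar to that of Theorem \ref{Thm: Payoff with changing Sale price}'' (i.e., it implicitly asks the reader to re-run the Appendix B computation with $L(a)\equiv L$), whereas your primary route derives the result as a literal corollary of Theorem \ref{Thm: Payoff with changing Sale price} by substitution: constant $L$ satisfies the standing hypotheses ($L$ non-increasing, $L\geq R$), $\varphi(t)$ collapses to $F_\xi(L)$, $L_0=L$, and the value-gap factor $F_\xi(L\vee y)-F_\xi(R\vee y)$ pulls out of the integral defining $\psi$, turning the remaining average into $1-\frac{1}{\Lambda(t)}\int_0^t\lambda(a)F_\tau(t-a)\,\mathrm{d}a$ --- exactly the stated $\psi$. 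That substitution argument is complete given the general theorem and is shorter than what the paper asks of the reader. Your secondary, self-contained derivation also differs in technique from Appendix B on the first term: you obtain $\prob\{M>y,\,\beta>t\mid\mathcal{F}_t\}=\mathrm{e}^{\Lambda(t)(F_\xi(L)-1)}\big(1-\mathrm{e}^{-\Lambda(t)\psi(t,y)}\big)$ in one step from the independence of the disjoint value strips $[L,\infty)$ and $[R\vee y,L)$ of the marked Poisson process, while the paper conditions on $N(t)=n$, computes $\varphi(t)^n$ and $\big(\varphi(t)-\psi(t,y)\big)^n$, and resums via Lemma \ref{Lemma: Sum Formulas}; the two computations agree, and yours is cleaner precisely because constant $L$ removes the coupling between offer values and arrival times. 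One caveat: in your second term you take the conditional density of $\beta$ given $\{\beta\le t\}$ to be $\lambda(a)/\Lambda(t)$, exactly as the paper's unlabeled lemma in Appendix B does; strictly speaking, the first point of the thinned above-list process conditioned to fall in $[0,t]$ has density proportional to $\lambda(a)\big(1-F_\xi(L)\big)\mathrm{e}^{-(1-F_\xi(L))\Lambda(a)}$, so that step of your direct proof is only as rigorous as the paper's own treatment --- no more and no less --- and your specialization route has the advantage of sidestepping the issue entirely by citing Theorem \ref{Thm: Payoff with changing Sale price} wholesale.
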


Finally, in the third case, the seller does not announce a list price. In
this case, his payoff process at time $t$ can be written as
\begin{equation}
X(t)=\exp \Big(-\int_{0}^{t}r(s)ds\Big)\cdot \Big(\max_{0\leq i\leq N(t)}\xi
_{i}\cdot 1\!\!1_{\{\xi _{i}\geq R\}}\cdot 1\!\!1_{\{\tau _{i}\geq t-A_{i}\}}%
\Big).  \label{Eqn: Discounted Payoff No Sale Price}
\end{equation}

The simplicity of (\ref{Eqn: Discounted Payoff No Sale Price}) compared to (%
\ref{Eqn: Discounted Payoff}) results in a simplification in the expected
payoff formula at time $t$, which is given in Theorem \ref{Thm: no sale
price}. Again, we do not provide a proof because it is similar to the proof
of Theorem \ref{Thm: Payoff with changing Sale price}.

\begin{theorem}
\label{Thm: no sale price} Let $P(t) = \mathbb{E}\left[X(t)\right]$ and $%
P\left(t | \mathcal{F}_t\right) = \mathbb{E}\left[X(t) | \mathcal{F}_t\right]
$. If the seller does not announce any list price (i.e., $L=\infty$), then $%
P(t) = \mathbb{E}\left[P(t | \mathcal{F}_t)\right]$ and
\begin{eqnarray}
P\left(t | \mathcal{F}_t\right) = \exp\left(-\int_0^t r(s)\mathrm{d}s
\right) \int_0^\infty \left(1-\exp\left(-\Lambda(t)\psi(t,y) \right)\right)
\mathrm{d}y,
\end{eqnarray}
where $\psi(t,y) = \left(1- F_\xi(R \vee y)\right) \left(1-\frac{1}{%
\Lambda(t)} \int_0^t \lambda(a) F_\tau(t-a) \mathrm{d}a \right)$.
\end{theorem}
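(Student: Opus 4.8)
Write $M(t)=\max_{0\le i\le N(t)}\xi_i\,\Indic_{\{\xi_i\ge R\}}\,\Indic_{\{\tau_i\ge t-A_i\}}$ for the largest admissible offer that is still active at time $t$, so that $X(t)=\exp\!\big(-\int_0^t r(s)\mathrm{d}s\big)\,M(t)$. Since the convention $\xi_0=A_0=\tau_0=0$ together with $R>0$ forces the $i=0$ term to vanish, we have $M(t)\ge 0$ almost surely. The discount factor $\exp\!\big(-\int_0^t r(s)\mathrm{d}s\big)$ is $\mathcal{F}_t$-measurable, so the first step is to factor it out and reduce the claim to computing $\mathbb{E}[M(t)\mid\mathcal{F}_t]$. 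Because $M(t)$ is nonnegative, I would then invoke the layer-cake (tail-integral) representation
\begin{equation*}
\mathbb{E}[M(t)\mid\mathcal{F}_t]=\int_0^\infty \mathbb{P}\big(M(t)>y\mid\mathcal{F}_t\big)\,\mathrm{d}y,
\end{equation*}
which already produces the integral over $y$ appearing in the statement; it remains to identify the conditional survival probability with $1-\exp(-\Lambda(t)\psi(t,y))$.

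The core of the argument is the computation of $\mathbb{P}(M(t)>y\mid\mathcal{F}_t)$. The event $\{M(t)>y\}$ holds precisely when at least one offer $i$ is \emph{good}, meaning it is admissible ($\xi_i\ge R$), exceeds the level $y$ ($\xi_i>y$), and is still active at time $t$ ($\tau_i\ge t-A_i$). Under continuity of $F_\xi$ the first two conditions merge into the single threshold $\xi_i>R\vee y$, so that an offer arriving at time $a$ is good with conditional probability $p(a,y)=\big(1-F_\xi(R\vee y)\big)\big(1-F_\tau(t-a)\big)$, the two factors being independent by the assumed independence of the value mark $\xi_i$ and the withdrawal time $\tau_i$. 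Conditional on $\mathcal{F}_t$, the arrivals form a nonhomogeneous Poisson process of intensity $\lambda(a)$ on $[0,t]$, each carrying an independent mark $(\xi_i,\tau_i)$; by the thinning principle for Poisson processes (Resnick, 1992), the good offers form a Poisson process whose mean is $\Theta(t,y)=\int_0^t\lambda(a)\,p(a,y)\,\mathrm{d}a$. Hence the probability of \emph{no} good offer is $\exp(-\Theta(t,y))$, giving $\mathbb{P}(M(t)>y\mid\mathcal{F}_t)=1-\exp(-\Theta(t,y))$.

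The final step is routine algebra: pulling the $y$-dependent factor out of the time integral yields
\begin{equation*}
\Theta(t,y)=\big(1-F_\xi(R\vee y)\big)\Big(\Lambda(t)-\int_0^t\lambda(a)F_\tau(t-a)\,\mathrm{d}a\Big)=\Lambda(t)\,\psi(t,y),
\end{equation*}
which is exactly the stated $\psi(t,y)=\big(1-F_\xi(R\vee y)\big)\big(1-\tfrac{1}{\Lambda(t)}\int_0^t\lambda(a)F_\tau(t-a)\,\mathrm{d}a\big)$. Substituting back into the tail integral and reinstating the discount factor recovers the formula. I expect the main obstacle to be the careful justification of the thinning step: one must verify that, conditional on $\mathcal{F}_t$, the arrival-time/value/withdrawal marks are jointly independent across offers and independent of $N(t)$ given its rate, so that the coloring of the Poisson process by the (time-inhomogeneous) probability $p(a,y)$ is legitimate and the zero-count probability is genuinely $\exp(-\Theta(t,y))$. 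Relative to Theorem \ref{Thm: Payoff with changing Sale price}, the absence of a list price removes the coupling between offer intensity and arrival time, which is precisely why no $\varphi(t)$ term or second (sale-triggering) integral survives here.
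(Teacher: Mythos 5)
Your proof is correct, and it reaches the formula by a genuinely different middle step than the paper's. The paper does not prove this theorem separately; it states that the proof specializes the Appendix~B machinery used for Theorem~\ref{Thm: Payoff with changing Sale price}: condition on $N(t)=n$ in addition to $\mathcal{F}_t$, use the fact that given $N(t)=n$ the arrival times are i.i.d.\ with conditional density $\lambda(a)/\Lambda(t)$ to obtain
\begin{equation*}
\prob\{M(t)>y\mid\mathcal{F}_t,N(t)=n\}=1-\bigl(1-\psi(t,y)\bigr)^n,
\end{equation*}
integrate this tail in $y$, and then average over $n$ with the Poisson weights $p_n(t)$ via the generating-function identity of Lemma~\ref{Lemma: Sum Formulas}, which is what produces the exponential. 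You bypass the conditioning on the offer count entirely: by the marking/thinning theorem for nonhomogeneous Poisson processes, the offers exceeding $R\vee y$ that are still active at time $t$ form, conditionally on $\mathcal{F}_t$, a Poisson process with mean $\Theta(t,y)=\Lambda(t)\psi(t,y)$, so the exponential appears directly as a void probability. The two routes share the same skeleton --- factor out the $\mathcal{F}_t$-measurable discount and apply the layer-cake formula to $\expect[M(t)\mid\mathcal{F}_t]$ --- and rest on the same model assumptions (marks $(\xi_i,\tau_i)$ i.i.d., independent of the arrivals and of the rate filtration), which is exactly what licenses your position-dependent thinning with retention probability $p(a,y)=(1-F_\xi(R\vee y))(1-F_\tau(t-a))$ for fixed $t$ and $y$; your closing remark correctly identifies this as the one step needing care. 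What your route buys is brevity and transparency: the exponential is explained probabilistically rather than emerging from a power-series resummation, and it is immediate why no $\varphi(t)$ term survives when $L=\infty$. What the paper's route buys is uniformity: the conditioning-plus-summation template still works in Theorem~\ref{Thm: Payoff with changing Sale price}, where the sale-triggering event $\{\beta>t\}$ couples the offer values to the arrival times and a one-line thinning of ``good'' offers no longer suffices, so the paper can reuse a single argument for all three theorems instead of treating this case separately.
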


Further simplifications to Theorems \ref{Thm: Payoff with changing Sale
price}, \ref{Thm: constant list price} and \ref{Thm: no sale price} are
possible depending on the distribution of the offer intensities, the
distribution of offer waiting times, and the distribution of the stochastic
process governing the offer arrival times.

\subsection{The Microstructure of Housing Markets}

This section uses the previous model via simulation to show how the list
price of a house changes over time, if the seller uses an optimal sale time.
We focus on a single house and track its price evolution for a given period
of time. During this horizon, the house may be sold a number of times and
the resulting sequence of sale prices constitutes the price evolution. For
each owner of the house, the model evolves similarly.

\subsubsection{The Occupation Period}

A new owner's occupation starts after the sale of the house. The buyer (new
owner) knows how much he paid for the asset and this constitutes his
reservation price, $R$. The owner will try to sell the asset for at least
this amount when he posts the asset for sale. We assume that each owner
lives in the house for a certain amount of time, $O$ years, which may vary
from individual to individual. Within this period, he does not want to sell
the asset unless there is a shock (such as relocation necessity,
unsuitability of the asset after a change in the size of his family,
personal insolvency or bankruptcy, default and foreclosure, etc.). If the
owner of the house does not realize a shock, he will not try to sell the
asset until $O$ years has passed.

\subsubsection{Personal Crises and Profit Opportunities}

There are two types of exogenous shocks to the owner of the house:

\begin{itemize}
\item \textit{Personal crises:} If the owner encounters an external shock
that is not market-related (such as losing his job, family- or job-related
relocation necessity, personal insolvency, defaulting on the loan, etc.), he
will immediately try to sell the house. This personal crisis could represent
a default on the loan, a foreclosure, and the transfer of ownership to the
bank, who would then put the house on the market. Including default and
foreclosure in the personal crises is essential for relating our price
evolution to that experienced in the recent subprime mortgage market crisis.

\item \textit{Market related profit opportunities:} After using the house
for $O$ years, the owner awaits an optimal market environment to post the
house for sale. While waiting for the optimal market environment, he still
faces the possibility of a personal crisis. We assume that the owner posts
the asset for sale whenever the interest rates fall below a certain
threshold, $\phi $. At this level, the owner expects to receive many offers
exceeding his reservation price.
\end{itemize}

As a result of these two exogenous shocks, the house will eventually be
offered for sale. Denoting the time to personal crisis by $\omega $ and the
time to a possible profit opportunity by $\pi $, then time to posting the
house for sale, $\upsilon $, is

\begin{equation}
\upsilon =\min (\omega ,\pi )\ \mbox{where }\pi =\inf (t\geq X:r(t)\leq \phi
).
\end{equation}%
Consequently, $\upsilon $ is the total occupation time, after which the
house is posted for sale with an initial list price, $L_{0}\geq R$. The
reservation price $R$ is private.

\subsubsection{The Sale Process with an OWT}

After the house is posted for sale, the seller sets an optimal waiting time,
OWT, which maximizes his expected utility. During the OWT, he collects
offers from prospective buyers and sells the asset immediately if he
receives an offer greater than the current list price. Everyone in the
market knows the list price, $L(T)$, but they do not know the seller's
reservation price, $R$. In our model, the list price is a function of $T$,
and gradually converges to $R$.\footnote{%
Note that $T$ represents the waiting time and $t$ is the actual time.} We
assume that $L(T)=R+(L_{0}-R)e^{-\zeta T}$ where $L_{0}$ is the initial list price and $%
\zeta $ is a positive real number.

Let $p_{1}(T_{1}),...,p_{k}(T_{k})$ be the offers received during waiting
time where $p_{k}(T_{k})$ is the $k$th offer received at time $T_{k}$. The
owner sells the house at time $T_{k}$ if $k$ satisfies:
\begin{equation}
p_{k}(T_{k})\geq L(T_{k})\ \text{and }p_{m}(T_{m})<L(T_{m})\ \forall m<k
\end{equation}%
If the owner does not receive any offers greater than the list price, but
receives offers exceeding $R$, then the owner sells the house to the highest
available offer, $p$, at the end of OWT. That is, if\ $\exists m$ such that $%
m\in \{1,2,...,n-1,n\}$, $p_{m}(T_{m})\geq R$ and $p_{m}(T_{m})$ is not
withdrawn, then $p=\max \left( {p_{k_{1}}(T_{k_{1}}),...,p_{k_{i}}(T_{k_{i}})%
}\right) $ where $i$ is the number of the available offers that are not
withdrawn until the end of OWT $\left( \text{Assume }i\geq 1\right) $.

If there is no offer exceeding $R$ at the end of the OWT, then the owner
must decrease his reservation and list price, and re-posts the house for
sale.

To complete the price evolution analysis, we use the framework introduced in
Section \ref{section:case2}. We assume that buyers make their offers at
random times according to a non-homogeneous Poisson distribution whose
intensities are defined by the demand function, $\lambda (t,r(t),L(T))$. At
these random times, they offer a price distributed uniformly arising from
their valuations of the asset. Buyers also withdraw their offers according
to a known random process. Different from the model in Section \ref%
{section:case2}, the arrival intensity of the offers equals $\lambda
(t,r(t),L(T))$. Interest rates in this analysis are no longer constant, but
stochastic. If we denote the expected utility function by $\mathcal{Y}%
(T,\lambda (t,T),\mu ,R,L(T),p_{\min },p_{\max },r(t),\gamma )$, then OWT is
defined as

\begin{definition}
$T^{\ast }=\arg {\max }\{T\geq 0:\mathcal{Y}(T,\lambda (t,T),\mu
,R,L(T),p_{\min },p_{\max },r(t),\gamma )\}$.
\end{definition}

\subsubsection{Updating Reservation and List Prices}

There are two different scenarios for updating the list and reservation
prices. If the seller sells the house before the end of the OWT, the new
reservation price for the next period equals the agreed sale price (as the
buyer of the house would not want to sell it for less). If the distribution
of the offers lies between $p_{\min }$ and $p_{\max }$, then the initial
post price $L_{0}$ equals $p_{\max }$ and $L(T)$ gradually decreases from $%
L_{0}$ to $R$ during the sale process.

If the seller does not succeed in selling the house, a new period starts for
this sale process with lower reservation and list prices. The seller chooses
a new reservation price which is between $R$ and $p_{\min }$ and sets $L_{0}$
to $R$. With these adjustments, he increases the probability of selling the
house in the next period.

\section{The Simulation}

This section presents the simulation results for the price evolution process.

\subsection{The Model Parameters}

We assume that interest rates evolve according to a Cox-Ingersoll-Ross
(1985) process:
\begin{equation*}
dr(t)=\kappa (\theta -r(t))dt+\sigma \sqrt{r(t)}dW(t).
\end{equation*}%
We assume that $\lambda (t,r(t),L(T))$ takes the following functional form
\begin{equation}
\lambda (t,r(t),L(T))=\frac{K_{1}}{r(t)}+\frac{K_{2}}{L(T)},
\end{equation}%
where $K_{1}$ and $K_{2}$ are constants. As either interest rates or the
list price increases, the probability of offers decline.

We first simulate occupation periods and exogenous shocks. The sale process
starts either with a personal shock or a profit opportunity. For each sale
process, we numerically calculate $T^{\ast }$ and model the offer arrival
times using the demand function as a non-homogeneous Poisson process. At the
arrival times, we generate independent offers using a uniform distribution
around $p_{\min }$ and $p_{\max }$. For the withdrawal times of the offers,
we use exponential distribution with parameter $\mu $. At the end of the
period, we update our reservation and list prices depending on the outcome
of the sale process. If there is a sale, the new reservation price equals $R$
and $L_{0}$ became $p_{\max }$. If there is no sale, then the new
reservation price equals $\frac{R+p_{\min }}{2}$ and $L_{0}$ takes the value
of $R$. A new period starts with these new parameters, current level of
interest rates and demand. The appendix includes the parameters used in the
simulation.

\subsection{Results}

Figure \ref{fig:OWT_summary} shows the evolution of the OWT over a four-year
period along with the demand intensity and interest rates for a single
realization. This graph differs from Figure \ref{fig:combined_lambda_mu_r}
in that it captures the evolution of all the parameters of the model. The
horizontal axis no longer represents the waiting time, but instead shows the
actual simulation time.

\begin{figure}[!t]
\begin{center}
\includegraphics[width=11cm]{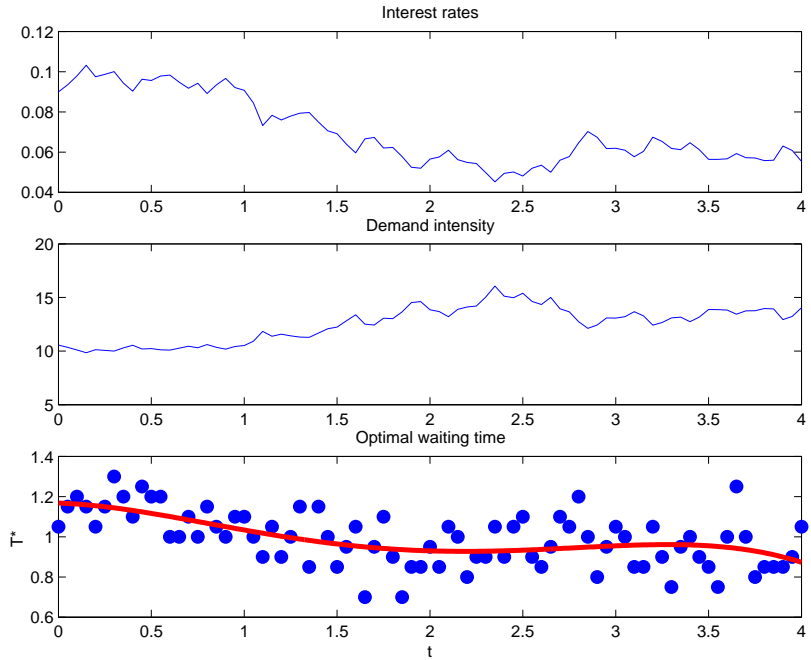} 
\end{center}
\caption{Evolution of optimal waiting time in a single realization of
interest rates and a stochastic demand function}
\label{fig:OWT_summary}
\end{figure}

This figure shows that as interest rates decrease, the demand intensity and
the OWT increase. With the increase in the demand intensity and the decrease
in the interest rates level, OWT decreases since, in these circumstances,
the seller gets more offers and his payoff is discounted by a lower factor.
This figure also demonstrates that changing market conditions have a broad
impact on the house sale process. In a low-interest rate and high-demand
environment, the seller keeps the house on the market for a shorter time
than he would in a high-interest and low-demand environment.

\begin{figure}[!t]
\begin{center}
\includegraphics[width=11cm]{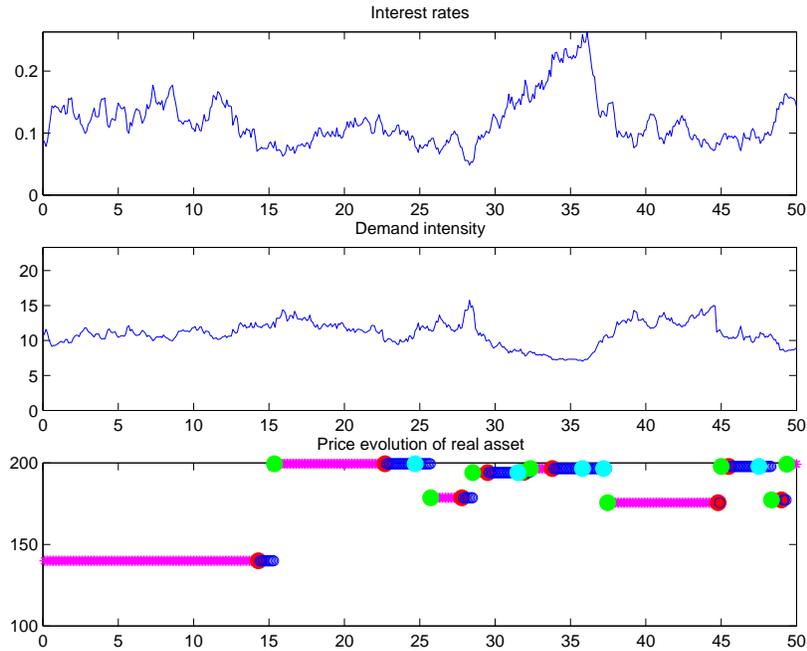} 
\end{center}
\caption{Single price evolution of a real asset in a realization of interest
rates and a stochastic demand function}
\label{fig:evolution}
\end{figure}

Figure \ref{fig:evolution} illustrates the house sale price along with the
corresponding functions of interest rates and demand intensities over a
fifty-year period. The price evolution in the third row shows the random
occupation periods, times of exogenous shocks due to personal crises or
profit opportunities. It also includes whether or not the price process
results in success with the appropriate labels shown in the legend of Figure %
\ref{fig:evolution_summary}. This figure extends the price evolution shown
in Figure \ref{fig:evolution}.

Figure \ref{fig:evolution_summary} studies the decision process of the
seller. The occupation period, decision of sale, time on the market, and
time of the sale are illustrated in the figure. Until the moment of sale,
all of these time periods are shown at the same price level which
constitutes the price that the seller paid at his initial purchase of the
house.

In Figure \ref{fig:evolution_summary}, pink circles represent the occupation
period in which the seller decides not to sell the asset. Pink circles are
always followed by a black or red dot which represents the decision of sale
due to a either a profit opportunity or a personal crisis, respectively. At
the time of sale decision, the seller sets an OWT, taking the current market
conditions into account. The waiting period is shown by the blue circles.
After this waiting time, a green or magenta dot follows, signifying the
event of sale and no-sale respectively. If there is a sale, a new period
starts with the occupation period of the new owner of the house.

\begin{figure}[!t]
\begin{center}
\includegraphics[width=11cm]{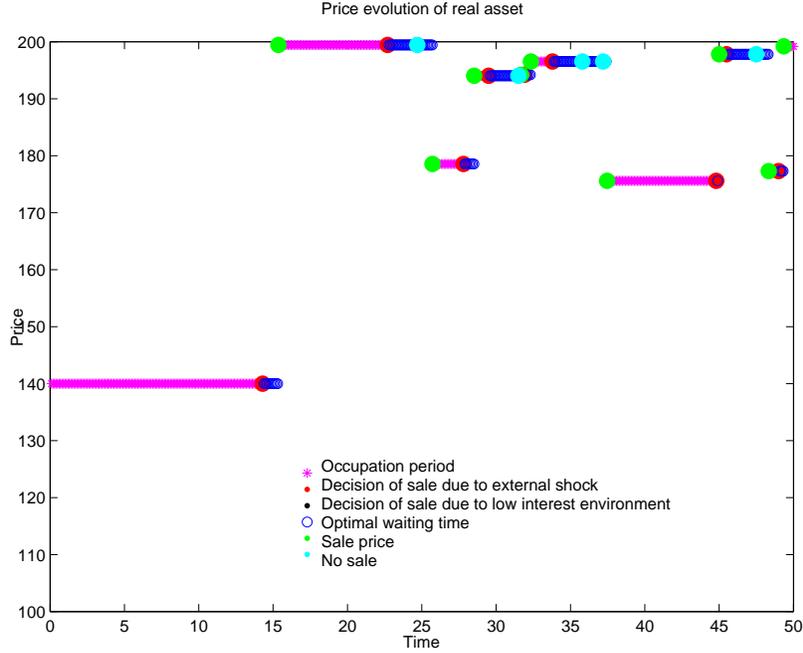} 
\end{center}
\caption{Single price evolution of a real asset with illustration of
subperiods and sale or no sale outcomes}
\label{fig:evolution_summary}
\end{figure}

\begin{figure}[!t]
\begin{center}
\includegraphics[width=11cm]{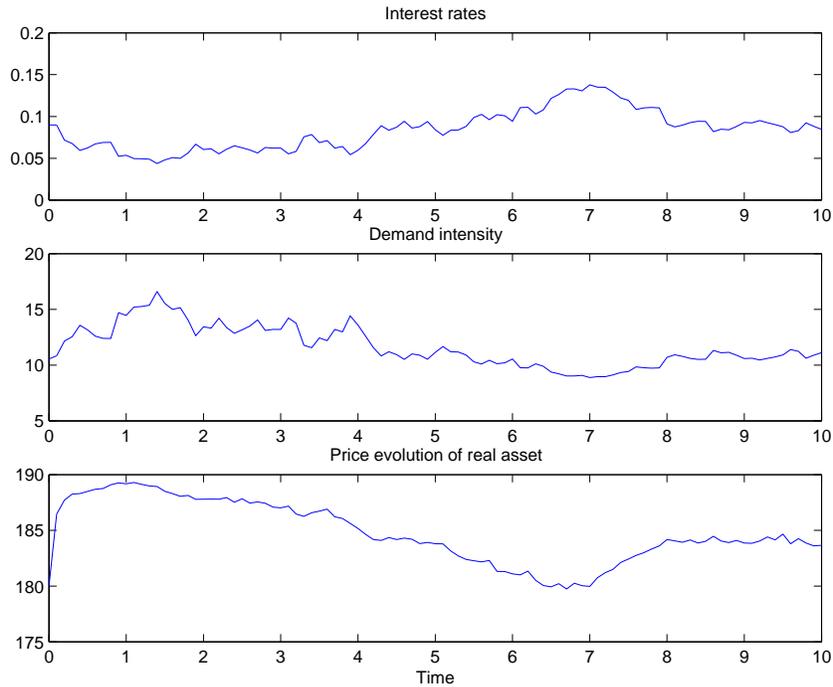} 
\end{center}
\caption{Expected price evolution of a real asset in a single realization of
interest rates and a stochastic demand function}
\label{fig:cont_price_evol}
\end{figure}


If the seller does not sell the house, he lowers the reservation and list
prices and sets a new OWT by using the new parameters and current market
conditions. Blue circles follow the magenta dot in the case of no-sale. In
this second sale attempt, he may still sell the asset for a price higher
than his initial reservation price, but this probability is less than in the
first attempt. Our results illustrate this conclusion, as in the figure,
four cases result in no-sales, only one of which is sold for a price higher
than the initial reservation price. The seller may not sell the house in
this attempt either. In this case, he will be further lower his reservation
and list prices. However, this probability is also lower: out of four cases
of no-sale, only one seller encountered two successive failed attempts.

Figure \ref{fig:cont_price_evol} graphs the expected price evolution of the
real asset without considering occupation periods or exogenous shocks. This
figure shows the mean sale price if the seller posts the house for sale at a
given time. This figure still assumes that the seller sets an optimal
waiting time and has his own reservation price. This figure supports the
conclusions from the single simulation. Note that the expected sale price
declines (increases) with a decrease (increase) in the demand and a rise
(fall) in interest rates.

Our simulation results show that the fluctuation in house prices are driven
by interest rates, demand, and time-on-the-market (TOM). In Figure \ref%
{fig:evolution}, interest rates increase between the 25th and 40th year, and
this increase results in two unsuccessful sale attempts in Figure \ref%
{fig:evolution_summary}. During this period, the reservation price of the
house is close to its maximum and, with the high-interest and low-demand
environment, the seller could not succeed in selling the house in two
trials. Eventually, the seller sells the asset in the third trial with a
significant discount compared to his reservation price.

\subsection{The Subprime Lending Crisis}

Our simulation provides some insights into the housing market crash observed
during the subprime lending crisis (for a review see Crouhy, Jarrow,
Turnbull \cite{CJT}). The record low borrowing rates during 1999-2004
increased housing affordability with the corresponding up-trend in home
prices as shown in Figure \ref{fig:rate_inv}. When borrowing rates increased
and home buyers' mortgage contracts began to reset to these higher rates,
low credit borrowers had a difficult time making their monthly payments.
They either defaulted or were forced to put their homes up for sale. This
coincides exactly with our modeling of a \textit{personal crisis}. Due to
this shock, the home owner had to sell his house in an unfavorable market:
high interest rates and low demand. As shown in our simulation results, the
expected price in such a market environment is low.

Another catalyst for subprime defaults were the higher reservation prices
locked in during the bull period. Although demand should have been low
during this period, borrowers still obtained financing - helped by the lower
standards of the mortgage originators. To support this view, note that
during 2002-2006, the Combined Loan to Value (CLTV) ratio increased. In the
subprime category, it rose to 88\% from 81\% and in Alt-A category, it rose
to 85\% from 73\%\footnote{%
Source: UBS Mortgage Research.}. When the subprime homes were put on the
market, they could not get any offers matching their list price. As we
discussed in our simulation, in a high-interest and low demand environment,
it is very difficult to sell a property with a high reservation price. Since
most of the subprime borrowers paid for their homes at record high values,
they had to lower their reservation prices during the sale process. This
result is similar to our successive no-sale events when the reservation
price of the owner is close to $p_{\max }$.

\section{Conclusion}

This paper proposes a new model to describe the evolution of housing prices.
We include exogenous shocks to the real estate market that may be the result
of a profit opportunity or personal crisis. We investigate the sale process
by introducing a new model for analyzing the time-on-the-market (TOM) with a
new construct, the optimal waiting time (OWT). We study the comparative
statistics of the OWT with respect to the model's parameters such as the
arrival intensity, the offer withdrawal (cancelation) rates, and interest
rates. We specifically look at the pairwise impact of these parameters and
how they affect the resulting OWT in the different regions of the surface.
We incorporate our theory of the OWT into a price evolution analysis that
also includes an occupation period, a personal crisis or profit-taking
opportunity, and the deterministic updating of model parameters with the
occurrence of the sale. Our model specifically considers the occurrence of
sale, no sale conditions, and how the seller responds to the no sale
scenario. Our simulation results show that it becomes more difficult to sell
a house in a high-interest, low-demand environment and that these conditions
may require the seller to sell the asset below his initial reservation
price. This results in a dynamic, time-dependent, and stochastic house price
process which provides some insights into the recent subprime credit crisis.

\clearpage%

\break

\appendix

\section{Appendix: Auxiliary Model}

We will derive the auxiliary model in this appendix. Buyers make offers with
$\exp (\lambda)$ and their offers are distributed uniformly with $U(p_{\min
},p_{\max })$. The seller's reservation price is $p_{\min }$. After making
an offer, a buyer may withdraw his offer with distribution $\exp (\mu )$.
Interest rate is constant and equals $r$. The seller wants to maximize
expected payoff with respect to waiting time, $T$. Let $N(T)$ be the number
of offers received by time $T$ and $\xi_{i}$ be the offer from buyer $i$, $%
B_{i}$, at the arrival time, $A_{i}$, and $\xi_{(i)}$ be the $i$th minimum
offer received by the seller. With these assumptions, the discounted
expected payoff, $u(.)$, is a function of $T$, $\lambda$, $\mu$, $p_{\min}$,
$p_{\max}$, and $r$.\newline

\begin{eqnarray*}
u(.) &=&\mathbb{E}\left[ \mathbb{E}\left[X\vert N(T)=n\right] \right] \\
&=&\sum_{n=0}^{\infty }\mathbb{E}\Big[\xi_{(n)}e^{-rT}1 \! \! 1_{[B_{n}\text{
is still interested}]}+\xi_{(n-1)}e^{-rT}1 \! \! 1_{[B_{n-1}\text{ is still
interested and }\xi_{(n)}\text{ is withdrawn}]} \\
&&\text{ \ \ \ \ }\ldots +\xi_{(1)}e^{-rT}1 \! \! 1_{[B_{1}\text{ is still
interested and }\xi_{(n)}...\xi_{(2)}\text{ are withdrawn}]}|N(T)=n\Big]
\mathbb{P}\left(N(T)=n\right) \\
&=&\sum_{n=0}^{\infty }\sum_{i=1}^{n}\mathbb{E}\left[ \xi_{(n-i+1)}e^{-rT}1
\! \! 1_{[B_{n-i+1}\text{ is still interested and }\xi_{(n)}...\xi_{(n-i+2)}%
\text{ are withdrawn}]}|N(T)=n\right] \times \\
&&\text{ \ \ \ \ \ } \mathbb{P}\left(N(T)=n\right) \\
&=&\sum_{n=0}^{\infty }\sum_{i=1}^{n}\mathbb{E}\left[ \xi_{(n-i+1)}|N(T)=n%
\right] \times \mathbb{E}\left[ e^{-rT}1 \! \! 1_{[B_{n-i+1}\text{ is still
interested}]}|N(T)=n\right] \times \\
&&\text{ \ \ \ \ \ }\mathbb{E}\left[ 1 \! \! 1_{[\xi_{(n)}...\xi_{(n-i+2)}%
\text{ are withdrawn}]}|N(T)=n\right] \mathbb{P}\left(N(T)=n\right).
\end{eqnarray*}

Each of the components of the sum is as follows
\begin{eqnarray*}
\mathbb{E}\left[ \xi_{(i)}|N(T)=n\right] &=&p_{\min }+\frac{(p_{\max
}-p_{\min })i}{n+1}, \\
\mathbb{E}\left[ 1 \! \! 1_{[B_{i}\text{ is still interested}]}|N(T)=n\right]
&=&\int_{0}^{T}\frac{1}{T}e^{-\mu x}dx=\frac{1}{\mu T}(1-e^{-\mu T}):=1-f(T),
\\
\mathbb{E}\left[ 1 \! \! 1_{[\xi_{(n)}...\xi_{(n-i+2)}\text{ are withdrawn}%
]}|N(T)=n\right] &=&[f(T)]^{i-1}, \\
\mathbb{P}\left( N(T)=n\right) &=&\frac{(\lambda T)^{n}e^{-\lambda T}}{n!}.
\end{eqnarray*}

Using these components, expected payoff becomes
\begin{eqnarray*}
u(.) &=&\sum_{n=0}^{\infty }\sum_{i=1}^{n}\left( p_{\min }+\frac{(p_{\max
}-p_{\min })(n-i+1)}{n+1}\right) e^{-rT}(1-f(T))(f(T))^{i-1}\left( \frac{%
(\lambda T)^{n}e^{-\lambda T}}{n!}\right) \\
&=&g(T)\sum_{n=0}^{\infty }\frac{(\lambda T)^{n}}{n!}\sum_{i=1}^{n}\left(
p_{\max }(f(T))^{i-1}-\frac{(p_{\max }-p_{\min })}{n+1}i(f(T))^{i-1}\right),
\end{eqnarray*}%
{\ where }$g(T)=(1-f(T))e^{-rT}e^{-\lambda T}.$ We use the following facts
in the final computation.
\begin{eqnarray*}
\sum_{i=1}^{n}ix^{i-1} &=&\frac{d(\sum_{i=1}^{n}x^{i})}{dx}=\frac{d(\frac{%
1-x^{n+1}}{1-x})}{dx}=\frac{nx^{n+1}-(n+1)x^{n}+1}{(1-x)^{2}} \\
\mbox{and} \hspace{1cm} & & \\
\sum_{i=1}^{n}x^{i-1} &=&\frac{1-x^{n}}{1-x}.
\end{eqnarray*}

Finally, the discounted expected payoff equals
\begin{eqnarray*}
u(T,\lambda ,\mu ,p_{\min },p_{\max },r) &=&g(T)\sum_{n=0}^{\infty }\frac{%
(\lambda T)^{n}}{n!}\Big[\frac{p_{\max }-p_{\min }}{(n+1)(1-f(T))^{2}}\left(
nf(T)^{n+1}-(n+1)f(T)^{n}+1\right) \\
&&\text{ \ \ \ \ \ \ \ \ \ }+p_{\max }\frac{1-f(T)^{n}}{1-f(T)}\Big] \\
&=&\frac{-g(T)(p_{\max }-p_{\min })}{(1-f(T))^{2}}(f(T)e^{\lambda Tf(T)}-%
\frac{1}{\lambda T}(e^{\lambda Tf(T)}-1)-e^{\lambda Tf(T)} \\
&&\text{ \ \ \ \ \ \ \ \ \ }+\frac{1}{\lambda T}(e^{\lambda T}-1))+\frac{%
p_{\max }g(T)(e^{\lambda T}-e^{\lambda Tf(T)})}{1-f(T)},
\end{eqnarray*}%
where $f(T)=1-\frac{1}{\mu T}(1-e^{-\mu T})$ and $g(T)=(1-f(T))e^{-rT}e^{-%
\lambda T}$.

\newpage

\section{Appendix: Theorem \protect\ref{Thm: Payoff with changing Sale price}%
}

We first establish some auxiliary results that will be used while proving
Theorem \ref{Thm: Payoff with changing Sale price}. Lemma \ref{Lemma: phi(t)}
gives us the formula for the conditional probability, which is conditioned
on $\mathcal{F}_t$ and $N(t) = n$, that an offer value is smaller than the
announced list price at its arrival time.

\begin{lemma}
\label{Lemma: phi(t)} Let $\varphi(t) = \mathbb{P}\{\xi_1 < L(A_1) |
\mathcal{F}_t, N(t)=n\}$ for $n \geq 1$. Then,
\begin{eqnarray}
\varphi(t) = \frac{1}{\Lambda(t)} \int_0^t \lambda(a) F_\xi\big(L(a)\big)
\mathrm{d}a.  \notag
\end{eqnarray}
\end{lemma}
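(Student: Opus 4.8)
The plan is to evaluate this conditional probability by a single round of conditioning on the arrival time $A_1$ of the first offer, and then integrating against its conditional density. The three inputs I would use are all stated in the setup of Section~\ref{subsection:macro}: (i) given $\mathcal{F}_t$ and $N(t)=n$, the offer arrival times are independently distributed on $[0,t]$ with conditional density $f_{A\mid\{\mathcal{F}_t,N(t)=n\}}(a)=\lambda(a)/\Lambda(t)$; (ii) the offer values $\xi_i$ are drawn from $F_\xi$ independently of the arrival process and of the interest-rate path (hence independent of $A_1$ and of $\mathcal{F}_t$); and (iii) the list price $L(\cdot)$ is a deterministic (non-random) function of time, so $L(a)$ is a constant once we fix $A_1=a$.

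First I would apply the tower property, conditioning on $A_1$ inside the event $\{\mathcal{F}_t,N(t)=n\}$:
\begin{equation*}
\varphi(t)=\mathbb{E}\Big[\mathbb{P}\big\{\xi_1<L(A_1)\,\big|\,A_1,\mathcal{F}_t,N(t)=n\big\}\;\Big|\;\mathcal{F}_t,N(t)=n\Big].
\end{equation*}
On the event $\{A_1=a\}$, the independence of $\xi_1$ from $A_1$ and $\mathcal{F}_t$ together with the fact that $L(a)$ is a constant gives $\mathbb{P}\{\xi_1<L(A_1)\mid A_1=a,\mathcal{F}_t,N(t)=n\}=\mathbb{P}\{\xi_1<L(a)\}=F_\xi\big(L(a)\big)$. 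Integrating this against the conditional density of $A_1$ from (i) yields
\begin{equation*}
\varphi(t)=\int_0^t F_\xi\big(L(a)\big)\,\frac{\lambda(a)}{\Lambda(t)}\,\mathrm{d}a=\frac{1}{\Lambda(t)}\int_0^t \lambda(a)\,F_\xi\big(L(a)\big)\,\mathrm{d}a,
\end{equation*}
which is the claimed formula.

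The computation is short, so the only point requiring care — and the place I would expect a careful reader to look — is the justification that the answer does not depend on $n$, even though $\varphi(t)$ is defined as a probability conditioned on $\{N(t)=n\}$ for an arbitrary $n\geq 1$. This holds because the marginal conditional density of a single arrival time $A_1$ is $\lambda(a)/\Lambda(t)$ for every $n\geq 1$ (the arrival times are i.i.d.\ with this common marginal), and because the event $\{\xi_1<L(A_1)\}$ involves only the first offer and its arrival epoch. I would state this explicitly so that the $n$-free right-hand side is seen to be consistent with the $n$-indexed conditioning on the left. No other obstacle arises; the independence of the marks $\xi_i$ from the point process is exactly what lets the inner conditional probability collapse to $F_\xi(L(a))$, and this same independence/exchangeability structure is what will later be reused when the full payoff $P(t\mid\mathcal{F}_t)$ is assembled in the proof of Theorem~\ref{Thm: Payoff with changing Sale price}.
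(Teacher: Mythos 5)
Your proof is correct and follows essentially the same route as the paper's: condition on the arrival time $A_1$, use the independence of the offer value $\xi_1$ from the arrival process and from $\mathcal{F}_t$ to reduce the inner probability to $F_\xi\big(L(a)\big)$, and integrate against the conditional arrival density $\lambda(a)/\Lambda(t)$. Your added remark on why the result is free of $n$ is a sensible clarification but does not change the argument.
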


\begin{proof}
Given $A_1 = a$, the event $\{\xi_1 < L(A_1)\}$ is independent of $\mathcal{F%
}_t$ and $N(t)$. Thus,
\begin{eqnarray}
\varphi(t) &=& \int_0^t f_{A|\{\mathcal{F}_t, N(t)=n\}}(a) \mathbb{P}\{\xi_1
< L(a) \} \mathrm{d}a  \notag \\
&=& \frac{1}{\Lambda(t)} \int_0^t \lambda(a) F_\xi \big( L(a) \big) \mathrm{d%
}a.  \notag
\end{eqnarray}
\end{proof}

The following lemma provides the formula for the conditional probability,
which is conditioned on $\mathcal{F}_t$ and $N(t) = n$, that none of the
offers arrived in the time interval $[0,t]$ is greater than the announced
sale at their arrival times.

\begin{lemma}
\label{Lemma: density of beta} $\mathbb{P}\{\beta>t | \mathcal{F}_t, N(t) =
n\} = \varphi(t)^n$ for $n \geq 1$.
\end{lemma}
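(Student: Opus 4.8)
The plan is to reduce the event $\{\beta>t\}$ to a statement about the finitely many offers that arrive in $[0,t]$, and then to exploit the conditional independence of those offers given $\mathcal{F}_t$ and $N(t)=n$. First I would observe that, since $\beta=\inf\{A_i:\xi_i\geq L(A_i),\,i\geq 1\}$, the event $\{\beta>t\}$ occurs precisely when no offer arriving in $[0,t]$ exceeds the list price at its own arrival time. Conditioning on $N(t)=n$, exactly $n$ offers arrive in $[0,t]$, namely those indexed $i=1,\dots,n$, so
\[
\{\beta>t\}\cap\{N(t)=n\}=\Big(\bigcap_{i=1}^{n}\{\xi_i<L(A_i)\}\Big)\cap\{N(t)=n\}.
\]
Hence $\mathbb{P}\{\beta>t\mid\mathcal{F}_t,N(t)=n\}=\mathbb{P}\{\xi_i<L(A_i),\,1\leq i\leq n\mid\mathcal{F}_t,N(t)=n\}$.

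Next I would invoke the conditional structure recalled before the statement of Theorem \ref{Thm: Payoff with changing Sale price}: given $\mathcal{F}_t$ and $N(t)=n$, the arrival times $A_1,\dots,A_n$ are i.i.d.\ on $[0,t]$ with density $\lambda(a)/\Lambda(t)$, and the offer values $\xi_1,\dots,\xi_n$ are i.i.d.\ with law $F_\xi$, independent of the arrival times and of $\mathcal{F}_t$. Consequently the $n$ pairs $(A_i,\xi_i)$ are conditionally independent, so the events $\{\xi_i<L(A_i)\}$, $1\leq i\leq n$, are conditionally independent given $\{\mathcal{F}_t,N(t)=n\}$. Each factor equals $\varphi(t)$ by Lemma \ref{Lemma: phi(t)}, and multiplying the $n$ identical factors yields $\varphi(t)^n$.

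The only delicate point is the conditional independence: one must be certain that, after conditioning on $N(t)=n$, the marks $\xi_i$ remain independent of the arrival times $A_i$ and of $\mathcal{F}_t$, so that the intersection over $i$ factorizes. This is exactly the marking/order-statistics property of the doubly stochastic Poisson process stated in Section \ref{subsection:macro}, so no extra work beyond citing that structure is needed; once it is in hand, the factorization into $n$ copies of $\varphi(t)$ is immediate.
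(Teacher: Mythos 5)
Your proposal is correct and follows essentially the same route as the paper's own proof: rewrite $\{\beta>t\}$ given $N(t)=n$ as the intersection $\bigcap_{i}\{\xi_i<L(A_i)\}$, factor it by the conditional i.i.d.\ structure of the pairs $(A_i,\xi_i)$ given $\mathcal{F}_t$ and $N(t)=n$, and identify each factor as $\varphi(t)$ via Lemma \ref{Lemma: phi(t)}. The only difference is that you make the conditional-independence justification explicit, which the paper leaves implicit.
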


\begin{proof}
\begin{eqnarray}
\mathbb{P}\{\beta>t | \mathcal{F}_t, N(t) = n\} &=& \mathbb{P}\Big( %
\bigcap_{i=0}^n \{\xi_i < L(A_i)\} | \mathcal{F}_t, N(t) = n \Big)  \notag \\
&=& \mathbb{P}\{ \xi_1 < L(A_1) | \mathcal{F}_t, N(t) = n \}^n =
\varphi(t)^n.  \notag
\end{eqnarray}
\end{proof}

Now, we calculate the conditional density of $\beta$, $f_{\beta | \{\mathcal{%
F}_t, N(t)=n, \beta \leq t\}}(s)$, conditioned on $\mathcal{F}_t$, $N(t)=n$,
and $\beta \leq t$.

\begin{lemma}
$f_{\beta | \{\mathcal{F}_t, N(t)=n, \beta \leq t\}}(s) = \frac{\lambda(s)}{%
\Lambda(t)}$.
\end{lemma}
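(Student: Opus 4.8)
The plan is to obtain the conditional density of $\beta$ by first computing its conditional survival function on $[0,t]$, then differentiating in the time variable and renormalizing by the probability that a qualifying offer exists. All computations are carried out conditionally on $\mathcal{F}_t$ and $N(t)=n$; under this conditioning the $n$ offers are exchangeable, with arrival times $A_1,\dots,A_n$ i.i.d.\ on $[0,t]$ of density $\lambda(a)/\Lambda(t)$ and values $\xi_1,\dots,\xi_n$ i.i.d.\ with law $F_\xi$, the two families being independent.

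First I would extend Lemma \ref{Lemma: density of beta} from the endpoint $s=t$ to a general $s\in[0,t]$. The event $\{\beta>s\}$ says that no offer is simultaneously early (arriving before $s$) and high (exceeding $L$ at its arrival time), i.e.\ $\bigcap_{i=1}^{n}\bigl(\{A_i>s\}\cup\{\xi_i<L(A_i)\}\bigr)$. By independence this factorizes as the $n$-th power of a single-offer probability, which upon conditioning on the arrival time and invoking Lemma \ref{Lemma: phi(t)} is $1-\int_0^s \tfrac{\lambda(a)}{\Lambda(t)}(1-F_\xi(L(a)))\,\mathrm{d}a$, reducing to $\varphi(t)^n$ when $s=t$. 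I would then differentiate this conditional tail in $s$ to obtain the sub-density of $\beta$ on $[0,t]$ and divide by $\mathbb{P}\{\beta\le t\mid\mathcal{F}_t,N(t)=n\}=1-\varphi(t)^n$ from Lemma \ref{Lemma: density of beta}, giving the density conditioned additionally on $\{\beta\le t\}$.

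The main obstacle is the coupling between the offer intensity and the arrival time noted just before Theorem \ref{Thm: Payoff with changing Sale price}. Because being the first qualifying offer requires the value to clear the list price, the single-offer probability naturally carries the selection factor $1-F_\xi(L(s))$, so differentiation produces a weighting by $1-F_\xi(L(s))$ alongside an $n$-dependent survival term. The delicate step is to account correctly for where this value-selection is charged: it must be attributed to the conditional law of the accepted offer $\xi_{i(\beta)}$—which, given $\beta=s$, is $F_\xi$ truncated to $[L(s),\infty)$ with density $f_\xi(x)/(1-F_\xi(L(s)))$—so that what remains as the arrival-time marginal of $\beta$ is the clean factor $\lambda(s)/\Lambda(t)$. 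Establishing that this reattribution leaves exactly $\lambda(s)/\Lambda(t)$ is the crux, and it is precisely the decomposition in which the density is subsequently used in the payoff integral of Theorem \ref{Thm: Payoff with changing Sale price}; getting the bookkeeping of these competing factors right is the hard part.
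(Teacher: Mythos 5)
Your first two steps are correct and rigorous, but the step you defer as ``the crux'' is a genuine gap, and it cannot be closed: carried out honestly, your computation refutes the stated density rather than proving it. Writing $G(s)=1-\frac{1}{\Lambda(t)}\int_0^s\lambda(a)\bigl(1-F_\xi(L(a))\bigr)\mathrm{d}a$, your first step gives $\mathbb{P}\{\beta>s\mid\mathcal{F}_t,N(t)=n\}=G(s)^n$ (note $G(t)=\varphi(t)$), and your differentiation-and-renormalization step then yields
\begin{equation*}
f_{\beta\mid\{\mathcal{F}_t,N(t)=n,\beta\le t\}}(s)=\frac{n\,G(s)^{n-1}\,\lambda(s)\bigl(1-F_\xi(L(s))\bigr)}{\Lambda(t)\bigl(1-\varphi(t)^{n}\bigr)},\qquad 0\le s\le t,
\end{equation*}
which is not $\lambda(s)/\Lambda(t)$. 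The proposed ``reattribution'' cannot repair this. The joint law of $\bigl(\beta,\xi_{i(\beta)}\bigr)$ factors as the marginal of $\beta$ times the conditional law of the accepted value given $\beta=s$, and that conditional law is exactly the truncation $f_\xi(x)/\bigl(1-F_\xi(L(s))\bigr)$ on $[L(s),\infty)$; it integrates to one for every $s$ and therefore cannot absorb any $s$-dependent factor from the marginal. Indeed, the factor $1-F_\xi(L(s))$ in the display is precisely what remains after integrating the value out, and putting the value back in reproduces the joint density $n\,G(s)^{n-1}\frac{\lambda(s)}{\Lambda(t)}f_\xi(x)$, leaving the marginal of $\beta$ untouched --- in particular the factor $n\,G(s)^{n-1}$, for which you have no candidate to absorb at all. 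A concrete check: constant $\lambda$, constant $L$, $t=1$, $n=2$, $1-F_\xi(L)=\frac{1}{2}$ gives $f(s)=\frac{4}{3}\bigl(1-\frac{s}{2}\bigr)$, a strictly decreasing density, whereas the lemma asserts the uniform density $1$. The discrepancy is real: $\beta$ is the \emph{minimum} of the qualifying arrival times, hence biased toward early times.

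For comparison, the paper's own proof is a one-line exchangeability claim: $\beta=A_{i(\beta)}$, the arrival times are conditionally i.i.d.\ with density $\lambda(s)/\Lambda(t)$, hence $A_{i(\beta)}$ has that same density. That argument commits a selection-bias error --- the random index $i(\beta)$ picks out the \emph{earliest} qualifying offer, so it is not independent of the arrival times, and $A_{i(\beta)}$ is stochastically smaller than a generic $A_1$. So the tension you flagged between the selection factor $1-F_\xi(L(s))$ (and the survival power $G(s)^{n-1}$) and the claimed clean density is not bookkeeping to be finessed; it is exactly the flaw in the paper's argument, and the lemma as stated fails for $n\ge 2$ (and even for $n=1$ whenever $F_\xi(L(s))$ genuinely varies with $s$). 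The honest conclusion of your approach is the displayed formula above, and downstream uses of the lemma --- the $P_2(t\mid\mathcal{F}_t)$ term in the proof of Theorem \ref{Thm: Payoff with changing Sale price} --- would need to be re-derived with that density in place of $\lambda(s)/\Lambda(t)$.
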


\begin{proof}
Let $i(\beta)$ be the index of the offer at time $\beta$. Then,
\begin{eqnarray}
\mathbb{P}\{\beta \leq s | \mathcal{F}_t, N(t)=n, \beta \leq t\} = \mathbb{P}%
\{A_{i(\beta)} \leq s | \mathcal{F}_t, N(t)=n, \beta \leq t\}.  \notag
\end{eqnarray}
Since $\beta \leq t$, we know that $i(\beta) \leq n$. Given $N(t)=n$ and $%
\mathcal{F}_t$, all offer arrival times are independently distributed over $%
[0,t]$ according to density $\frac{\lambda(s)}{\Lambda(t)}$. Thus,
\begin{eqnarray}
\mathbb{P}\{\beta \leq s | \mathcal{F}_t, N(t)=n, \beta \leq t\} = \frac{%
\Lambda(s)}{\Lambda(t)}.  \notag
\end{eqnarray}
Thus, $f_{\beta | \{\mathcal{F}_t, N(t)=n, \beta \leq t\}}(s) = \frac{%
\lambda(s)}{\Lambda(t)}$.
\end{proof}

In Lemma \ref{Lemma: offer probability}, we derive the conditional
probability, conditioned on $\mathcal{F}_t$ and $N(t)=n$, of an offer, which
is not withdrawn up to time $t$ and greater than the reservation price but
not exceeding the list price at its arrival time, to be greater than a
positive real number $y$.

\begin{lemma}
\label{Lemma: offer probability} Let $\psi(t,y) = \mathbb{P}\{\xi_1 1 \! \!
1_{\{R \leq \xi_1 < L(A_1)\}} 1 \! \! 1_{\{\tau_1 \geq t - A_1 \}} > y |
\mathcal{F}_t, N(t) = n\}$ for $n \geq 1$. Then,
\begin{eqnarray}
\psi(t,y) = \frac{1}{\Lambda(t)} \int_0^t \lambda(a) \Big(1-F_\tau(t-a)\Big)%
\Big(F_\xi\big(L(a)\vee y\big) - F_\xi\big(R \vee y\big)\Big) \mathrm{d}a.
\end{eqnarray}
\end{lemma}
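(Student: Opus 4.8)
The plan is to mirror the structure of Lemmas \ref{Lemma: phi(t)} and \ref{Lemma: density of beta}: condition on the arrival time $A_1 = a$ of the generic offer, integrate against its conditional density $\lambda(a)/\Lambda(t)$, and exploit the fact that, once $A_1$ is fixed, the offer value $\xi_1$ and the withdrawal time $\tau_1$ are mutually independent and independent of $\mathcal{F}_t$ and $N(t)$ (the same independence invoked in the proof of Lemma \ref{Lemma: phi(t)}).

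First I would use that $y > 0$ to reduce the event. Since the factor $\xi_1 1\!\!1_{\{R \leq \xi_1 < L(A_1)\}} 1\!\!1_{\{\tau_1 \geq t - A_1\}}$ can exceed the positive threshold $y$ only when both indicators equal one, the event $\{\xi_1 1\!\!1_{\{R \leq \xi_1 < L(A_1)\}} 1\!\!1_{\{\tau_1 \geq t - A_1\}} > y\}$ coincides with $\{R \leq \xi_1 < L(A_1)\} \cap \{\xi_1 > y\} \cap \{\tau_1 \geq t - A_1\}$. Conditioning on $A_1 = a$ and integrating against the arrival-time density then gives
\begin{equation*}
\psi(t,y) = \int_0^t \frac{\lambda(a)}{\Lambda(t)} \, \mathbb{P}\big\{ R \vee y \leq \xi_1 < L(a),\ \tau_1 \geq t - a \big\} \, \mathrm{d}a .
\end{equation*}

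Because $\xi_1$ and $\tau_1$ are independent given $A_1 = a$, the integrand factors into a withdrawal part and an offer part. The withdrawal part is $\mathbb{P}\{\tau_1 \geq t - a\} = 1 - F_\tau(t-a)$, using continuity of $F_\tau$. The offer part is $\mathbb{P}\{R \vee y \leq \xi_1 < L(a)\}$, which I would evaluate by a short case split: when $y < L(a)$ it equals $F_\xi\big(L(a)\big) - F_\xi\big(R \vee y\big)$ (nonnegative, since $L(a) \geq R$ forces $R \vee y \leq L(a)$), while when $y \geq L(a)$ the defining interval is empty and the probability is $0$. Both regimes are captured uniformly by $F_\xi\big(L(a) \vee y\big) - F_\xi\big(R \vee y\big)$, because in the degenerate regime $L(a) \vee y = y = R \vee y$ collapses the expression to zero. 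Substituting the factored integrand back into the display yields the claimed formula.

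The only genuinely delicate point is confirming that the compact expression $F_\xi\big(L(a) \vee y\big) - F_\xi\big(R \vee y\big)$ reproduces the correct value in both orderings of $y$ relative to $L(a)$, and in particular that the maximum in the upper argument is exactly what enforces the vanishing when $y \geq L(a)$; the reduction step relying on $y>0$ should likewise be stated explicitly. Everything else is a routine application of conditioning and independence already exercised in the preceding lemmas.
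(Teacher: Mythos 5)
Your proof is correct and takes essentially the same route as the paper's: condition on $A_1 = a$, integrate against the conditional arrival density $\lambda(a)/\Lambda(t)$, factor by independence into the withdrawal term $1-F_\tau(t-a)$ and the offer term, and identify the latter with $F_\xi\big(L(a)\vee y\big) - F_\xi\big(R\vee y\big)$. The only difference is that you make explicit the $y>0$ reduction and the case split between $y < L(a)$ and $y \geq L(a)$, which the paper leaves implicit in a single line.
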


\begin{proof}
Conditioned on $A_1 = a$, events $\{\tau_1 \geq t-A_1\}$ and $\{R \leq \xi_1
< L(A_1)\}$ are independent of each other as well as being independent of $%
\mathcal{F}_t$ and $N(t)$. Thus,
\begin{eqnarray}
\psi(t,y) &=& \frac{1}{\Lambda(t)} \int_0^a \lambda(a) \mathbb{P}\{\tau_1
\geq t-a\} \mathbb{P}\{\xi_1 1 \! \! 1_{\{R \leq \xi_1 < L(a)\}} > y\}
\notag \\
&=& \frac{1}{\Lambda(t)} \int_0^t \lambda(a) \Big(1-F_\tau(t-a)\Big)\Big(%
F_\xi\big(L(a)\vee y\big) - F_\xi\big(R \vee y\big)\Big) \mathrm{d}a.  \notag
\end{eqnarray}
\end{proof}

Let $p_n(t) = \mathbb{P}\{N(t) = n | \mathcal{F}_t \}$. The following
summation formula will also be used in the proof of Theorem \ref{Thm: Payoff
with changing Sale price}, and gives us the conditional moment generating
function of a Poisson process with stochastic intensity.

\begin{lemma}
\label{Lemma: Sum Formulas} For $q>0$, the following holds.
\begin{eqnarray}
\sum_{n=0}^\infty p_n(t) q^n &=& \exp\big(\Lambda(t) (q-1) \big).  \notag
\end{eqnarray}
\end{lemma}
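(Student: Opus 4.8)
The plan is to substitute the explicit conditional law of the arrival count directly into the defining sum and recognize the result as the Taylor expansion of the exponential. Recall from the model setup preceding Lemma \ref{Lemma: phi(t)} that, conditioned on $\mathcal{F}_t$, the count $N(t)$ is Poisson with (random but $\mathcal{F}_t$-measurable) mean $\Lambda(t)$, so that
\begin{equation*}
p_n(t) = \mathbb{P}\{N(t)=n \mid \mathcal{F}_t\} = \Lambda(t)^n \frac{\exp(-\Lambda(t))}{n!}.
\end{equation*}
Plugging this into the series $\sum_{n=0}^\infty p_n(t) q^n$ is the entire first step; conditionally on $\mathcal{F}_t$ the quantity $\Lambda(t)$ is a fixed nonnegative number, so all the manipulations below are deterministic computations in that number.

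Next I would pull the factor $\exp(-\Lambda(t))$, which does not depend on the summation index $n$, outside the sum, leaving $\exp(-\Lambda(t)) \sum_{n=0}^\infty \frac{(\Lambda(t)\,q)^n}{n!}$. The remaining series is exactly the Maclaurin series of $x \mapsto e^x$ evaluated at $x = \Lambda(t)\,q$, which converges absolutely for every real argument (in particular for all $q>0$, and indeed for all $q$), so the interchange implicit in writing the series is justified without further hypotheses. Recognizing this gives $\sum_{n=0}^\infty \frac{(\Lambda(t)q)^n}{n!} = \exp(\Lambda(t)\,q)$.

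Combining the two exponential factors then yields $\exp(-\Lambda(t))\exp(\Lambda(t)q) = \exp\big(\Lambda(t)(q-1)\big)$, which is the claimed identity. Conceptually this is nothing more than evaluating the probability generating function of a Poisson random variable at $q$, carried out pathwise in $\omega$ on the conditioning $\sigma$-algebra $\mathcal{F}_t$. There is essentially no obstacle here: the only point that even warrants a remark is the convergence of the exponential series, which is unconditional, so the result is immediate once the conditional Poisson form of $p_n(t)$ is invoked. The value of the lemma is purely as a bookkeeping identity to be reused inside the proof of Theorem \ref{Thm: Payoff with changing Sale price}.
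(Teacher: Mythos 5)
Your proof is correct: the paper states this lemma without proof, treating it as the standard fact that, conditional on $\mathcal{F}_t$, $N(t)$ is Poisson with mean $\Lambda(t)$, so the sum is just the Poisson probability generating function evaluated at $q$. Your pathwise computation --- substituting $p_n(t) = \Lambda(t)^n e^{-\Lambda(t)}/n!$, factoring out $e^{-\Lambda(t)}$, and summing the exponential series --- is exactly the verification the paper implicitly relies on, so there is nothing to add.
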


We now start proving Theorem \ref{Thm: Payoff with changing Sale price}. Let
$X_1(t)$ be the first term in (\ref{Eqn: Discounted Payoff}), and $X_2(t)$
be the second term in (\ref{Eqn: Discounted Payoff}). Let us define $P_1(t |
\mathcal{F}_t) = \mathbb{E}[X_1(t) | \mathcal{F}_t]$, and $P_2(t | \mathcal{F%
}_t) = \mathbb{E}[X_2(t) | \mathcal{F}_t]$. We first calculate $P_1(t |
\mathcal{F}_t)$.
\begin{eqnarray*}
P_1(t | \mathcal{F}_t) = \sum_{n=0}^\infty p_n(t) \mathbb{E}\big[X_1(t) |
\mathcal{F}_t, N(t) = n\big].  \notag
\end{eqnarray*}

Put $M(t) = \max_{0 \leq i \leq N(t)} \xi_i 1 \! \! 1_{\{ \xi_i \geq R\}}1
\! \! 1_{\{\tau_i \geq t - A_i\}}$. Then,
\begin{eqnarray}
\mathbb{E}[X_1(t) | \mathcal{F}_t, N(t) = n] = \mathrm{e}^{-\int_0^t r(s)
\mathrm{d}s} \mathbb{E}\big[M(t) 1 \! \! 1_{ \{\beta>t\}} | \mathcal{F}_t,
N(t) = n \big].  \label{Eqn: Conditional Expectation X1}
\end{eqnarray}
The expectation in (\ref{Eqn: Conditional Expectation X1}) can be calculated
by conditioning on the event $\{\beta>t\}$.
\begin{eqnarray*}
\mathbb{E}\big[M(t) 1 \! \! 1_{ \{\beta>t\}} | \mathcal{F}_t, N(t) = n \big] %
&=& \mathbb{P}\{\beta>t|\mathcal{F}_t, N(t)=n\} \mathbb{E}\big[M(t) |
\mathcal{F}_t, N(t)=n, \beta>t \big]  \notag \\
&=& \varphi(t)^n \mathbb{E}\big[M(t) | \mathcal{F}_t, N(t)=n, \beta>t \big].
\notag
\end{eqnarray*}

Since $M(t)$ is positive, its conditional expectation can be calculated by
integrating $\mathbb{P}\{M(t)>y | \mathcal{F}_t, N(t)=n, \beta>t\}$ with
respect to $y$ over $[0,\infty]$. Let us calculate $\mathbb{P}\{M(t)>y |
\mathcal{F}_t, N(t)=n, \beta>t\}$.
\begin{eqnarray*}
\lefteqn{\prob\{M(t)>y | \mathcal{F}_t, N(t)=n, \beta>t\}} \hspace{14cm}
\notag \\
\lefteqn{= 1- \Big( 1 - \prob\big\{\xi_1 \Indic_{ \{\xi_1 \geq R\}}
\Indic_{\{\tau_1 \geq t - A_1\}} > y | \mathcal{F}_t, N(t)=n, \xi_1 <
L(A_1)\big\} \Big)^n} \hspace{12cm}  \notag \\
\lefteqn{=1-\Big(1-\frac{\prob\{\xi_1 \Indic_{ \{R \leq \xi_1 \leq L(A_1)\}}
\Indic_{\{\tau_1 \geq t - A_1\}} > y | \mathcal{F}_t, N(t)=n \}}{%
\prob\{\xi_1 < L(A_1)|\mathcal{F}_t, N(t)=n\}} \Big)^n } \hspace{12cm}
\notag \\
\lefteqn{=1 - \Big(1-\frac{\psi(t,y)}{\varphi(t)} \Big)^n } \hspace{12cm}
\notag
\end{eqnarray*}

Integrating $\mathbb{P}\{M(t)>y | \mathcal{F}_t, N(t)=n, \beta>t\}$ over $y$%
, we obtain $\mathbb{E}[M(t)| \mathcal{F}_t, N(t)=n, \beta>t]$.

\begin{eqnarray}
\mathbb{E}[M(t)| \mathcal{F}_t, N(t)=n, \beta>t] &=& \int_0^\infty \mathbb{P}%
\{M(t)>y | \mathcal{F}_t, N(t)=n, \beta>t\} \mathrm{d}y  \notag \\
&=& \int_0^\infty \Big(1-\big(1-\frac{\psi(t,y)}{\varphi(t)} \big)^n \Big)
\mathrm{d}y.  \label{Eqn: Expectation M(t)}
\end{eqnarray}
Noting that $\psi(t,y) = 0 $ a.s. when $y > L_0$, we can further simplify (%
\ref{Eqn: Expectation M(t)}) to
\begin{eqnarray*}
\mathbb{E}[M(t)| \mathcal{F}_t, N(t)=n, \beta>t] = L_0 - \int_0^{L_0} \big(1-%
\frac{\psi(t,y)}{\varphi(t)} \big)^n \mathrm{d}y.
\end{eqnarray*}

As a result, $\mathbb{E}[X_1(t)| \mathcal{F}_t, N(t) = n]$ is equal to
\begin{eqnarray*}
\mathbb{E}[X_1(t)| \mathcal{F}_t, N(t) = n] = \mathrm{e}^{-\int_0^t r(s)%
\mathrm{d}s} \big(L_0 \varphi(t)^n - \int_0^{L_0} \big(\varphi(t) -
\psi(t,y) \big)^n \mathrm{d}y\big).
\end{eqnarray*}
Now, we average $\mathbb{E}[X_1(t)| \mathcal{F}_t, N(t) = n]$ over $N(t)$.
By using Fubini's theorem and Lemma \ref{Lemma: Sum Formulas}, we obtain
\begin{eqnarray*}
P_1(t | \mathcal{F}_t) = \mathrm{e}^{-\int_0^t r(s)\mathrm{d}s }\mathrm{e}%
^{\Lambda(t)(\varphi(t)-1)} \Big(L_0 - \int_0^{L_0} \mathrm{e}^{-\Lambda(t)
\psi(t,y)} \mathrm{d}y \Big).
\end{eqnarray*}

Let us now calculate $P_2(t|\mathcal{F}_t)$. It is equal to
\begin{eqnarray*}
P_2(t|\mathcal{F}_t) = \sum_{n=0}^\infty p_n(t) \mathbb{E}[X_2(t)|\mathcal{F}%
_t, N(t)=n] .  \notag
\end{eqnarray*}
where $\mathbb{E}[X_2(t) | \mathcal{F}_t, N(t)=n]$ is calculated as
\begin{eqnarray*}
\mathbb{E}[X_2(t) | \mathcal{F}_t, N(t)=n] &=& \mathbb{E}\big[\mathrm{e}%
^{-\int_0^\beta r(s)\mathrm{d}s} \xi_{i(\beta)} 1 \! \! 1_{\{\beta \leq t\}}
| \mathcal{F}_t, N(t)=n \big]  \notag \\
&=& \mathbb{P}\{\beta \leq t | \mathcal{F}_t, N(t)=n\} \mathbb{E}\big[%
\mathrm{e}^{-\int_0^\beta r(s)\mathrm{d}s} \xi_{i(\beta)} | \mathcal{F}_t,
N(t)=n, \beta \leq t \big]  \notag \\
&=& \big(1-\varphi(t)^n\big) \int_0^t f_{\beta | \{\mathcal{F}_t, N(t)=n,
\beta \leq t\}}(a) \mathrm{e}^{-\int_0^a r(s)\mathrm{d}s} \mathbb{E}\big[%
\xi_{i(a)}\big].  \notag
\end{eqnarray*}
The last equality follows from the fact that the magnitude of $%
\xi_{i(\beta)} $ is independent of $\mathcal{F}_t$ and $N(t)$, and only
depends on $a$ given the event $\{\beta=a\}$. Note also that
\begin{eqnarray*}
\mathbb{E}[\xi_{i(a)}] &=& \mathbb{E} [\xi_1 | \xi_1 \geq L(a) ]  \notag \\
&=& \frac{\int_{L(a)}^\infty x f_\xi(x) \mathrm{d}x}{1-F_\xi\big(L(a)\big)}
\end{eqnarray*}

Thus,
\begin{eqnarray*}
\mathbb{E}[X_2(t) | \mathcal{F}_t, N(t)=n] &=& \big(1-\varphi(t)^n\big) %
\int_0^t \int_{L(a)}^\infty f_{\beta | \{\mathcal{F}_t, N(t)=n, \beta \leq
t\}}(a) \frac{\exp(-\int_0^a r(s)\mathrm{d}s)}{1-F_\xi\big(L(a)\big)}x
f_\xi(x)\mathrm{d}x\mathrm{d}a  \notag \\
&=& \big(1-\varphi(t)^n\big) \int_0^t \int_{L(a)}^\infty \frac{\lambda(a)\exp%
\big(-\int_0^a r(s)\mathrm{d}s\big)}{\Lambda(t)\big(1-F_\xi\big(L(a)\big)%
\big)} f_\xi(x) x \mathrm{d}x \mathrm{d}a.  \notag
\end{eqnarray*}
$P_2(t|\mathcal{F}_t)$ is obtained by averaging $\mathbb{E}[X_2(t) |
\mathcal{F}_t, N(t)=n]$ over $N(t)$. By using Lemma \ref{Lemma: Sum Formulas}%
, we obtain
\begin{eqnarray*}
P_2(t|\mathcal{F}_t) = \big(1-\mathrm{e}^{\Lambda(t)(\varphi(t)-1)}\big) %
\int_0^t \int_{L(a)}^\infty \frac{\lambda(a)\exp\big(-\int_0^a r(s)\mathrm{d}%
s\big)}{\Lambda(t)\big(1-F_\xi\big(L(a)\big)\big)} f_\xi(x) x \mathrm{d}x
\mathrm{d}a.
\end{eqnarray*}
This completes the proof since $P(t|\mathcal{F}_t) = P_1(t|\mathcal{F}_t) +
P_2(t|\mathcal{F}_t)$.

{\ \bigskip \newpage }

\section{Appendix: Parameter Assumptions}

\begin{table}[htc]
\begin{center}
\begin{tabular}{|l|l|}
\hline
\textbf{Parameter}&\textbf{Value}\\
\hline

Arrival intensity ($\lambda$) & 5\\
Withdrawal intensity ($\mu$) & 5\\
Interest rate ($r$) & 0.1\\
Reservation price ($R$) & 140\\
List price ($L$) & 180\\
Waiting averseness ($\gamma$) & 0.1\\
$p_{\min }$ & 100\\
$p_{\max }$ & 200\\
\hline
\end{tabular}
\caption{Default parameter values in OWT analysis in Section 2}

\end{center}
\end{table}

\begin{table}[htc]
\begin{center}
\begin{tabular}{|l|l|}
\hline
\textbf{Parameter}&\textbf{Value}\\
\hline
Withdrawal intensity ($\mu$) & 10\\
Occupation period ($O$) & $U(4,6)$\\
Rate of personal crisis & $Exp(10)$\\
Initial reservation price($R_0$) & 140\\
Initial list price ($L_0$) & 200\\
Waiting averseness ($\gamma$) & 0.8\\
Interest rate threshold ($\phi$) & 0.06\\
$p_{\min }$ & 100\\
$p_{\max }$ & 200\\
$K_1$ & 0.5\\
$K_2$ & 1000\\
$\theta$ & 0.1\\
$\sigma$ & 0.08\\
$\kappa$ & 0.25\\
$r_0$ & 0.09\\
\hline
\end{tabular}
\caption{Default parameter values in the simulation}
\end{center}
\end{table}


\begin{thebibliography}{99}
\bibitem{Bailey [63]} Bailey, M., Muth, R. and Nourse, H. (1963), ``A
Regression Method for Real Estate Price Index Construction'', \textit{%
Journal of the American Statistical Association, }58: 933-942

\bibitem{CHM [04]} Capozza, D.R., Hendershott, P.H. and Mack, C. (2004),
\textquotedblleft An Anatomy of Price Dynamics in Illiquid Markets: Analysis
and Evidence from Local Housing Markets \textquotedblright , \textit{Real
Estate Economics}, 32(1), 1-32.

\bibitem{CaseShiller [87]} Case, K.E. and Shiller, R. (1987), ``Prices of
Single-Family Homes Since 1970: New Indexes For Four Cities'', \textit{New
England Economic Review}, Sept/Oct: 45-56.

\bibitem{CaseShiller [90]} Case, K.E. and Shiller, R. (1990), ``Forecasting
Prices and Excess Returns in the Housing Market'', \textit{Journal of the
American Real Estate and Urban Economics Association}, 18(3), 263-273.

\bibitem{Case [91]} Case, B. and Quigley, J.M. (1991), ``The Dynamics of
Real Estate Prices'', \textit{The Review of Economics and Statistics},
73(1), 50-58.

\bibitem{Cubbin [74]} Cubbin, J. (1974), ``Price, Quality, and Selling Time
in the Housing Market'', \textit{Applied Economics}, 6, 171-187.

\bibitem{CIR [85]} Cox, J.C., Ingersoll, J.E. and Ross, S.A. (1985),
\textquotedblleft A Theory of the Term Structure of Interest
Rates\textquotedblright , \textit{Econometrica}, 53: 385-407.

\bibitem{CJT} Crouhy, M., Jarrow, R. and Turnbull, S. (2008)
\textquotedblleft The Subprime Credit Crisis of 07,\textquotedblright\
\textit{Journal of Derivatives}, 16 (4), (Fall 2008), 81 - 110.

\bibitem{EHH [2007]} Evans, J., Henderson, V. and Hobson, D. (2007),
\textquotedblleft Optimal Timing For an Indivisible Asset Sale
\textquotedblright , to appear in \textit{Mathematical Finance}.

\bibitem{FRS [96]} Forgey, F.A., Rutherford, R.C. and Springer, T.A. (1996),
\textquotedblleft Search and Liquidity in Single-Family
Housing\textquotedblright , \textit{Real Estate Economics}, 24(3), 273-292.

\bibitem{GenesoveMayer [98]} Genesove, D. and Mayer, C. (1998), ``Equity and
Time to Sale in the Real Estate Market'', \textit{The American Economic
Review}, 87(3), 255-269.

\bibitem{GoetzmannPeng [2006]} Goetzmann, W. and Peng, L. (2006),
``Estimating House Price Indexes in the Presence of Seller Reservation
Prices'', \textit{The Review of Economics and Statistics}, 88(1): 100--112.

\bibitem{GHH [98]} Glower, M., Haurin, D.R., Hendershott, P.H. (1998),
``Selling Time and Selling Price: The Influence of Seller Motivation'',
\textit{Real Estate Economics}, 26(4), 719-740.

\bibitem{GW [05]} Grenadier, S.R. and Wang, N. (2005), ``Investment timing,
agency, and information'', \textit{Journal of Financial Economics}, 75(3),
493-533.

\bibitem{Haurin [88]} Haurin, D.R. (1988), ``The Duration of Marketing Time
of Residential Housing'', \textit{Journal of the American Real Estate and
Urban Economics Association}, 16(4), 396-410.

\bibitem{Kalra [97]} Kalra, R., Chan, C.C. and Lai, P. (1997), ``Time on
Market and Sales Price of Residential Housing: A Note'', \textit{Journal of
Economics and Finance}, 21(2), 63-66.

\bibitem{Knight [2002]} Knight, J.R. (2002), ``Listing Price, Time on
Market, and Ultimate Selling Price: Causes and Effects of Listing Price
Changes'', \textit{Real Estate Economics}, 30(2), 213-237.

\bibitem{krainer [02]} Krainer, J. and LeRoy, S.F. (2002), \textquotedblleft
Equilibrium valuation of illiquid assets\textquotedblright , \textit{%
Economic Theory}, 19, 223-242.

\bibitem{LinVandell [2007]} Lin, Z. and Vandell, K.D. (2007) ``Illiquidity
and Pricing Biases in the Real Estate Market'', \textit{Real Estate Economics%
}, 35(3), 291-330.

\bibitem{LucaPrescott [1974]} Lucas, R.E. and Prescott, E.C. (1974),
``Equilibrium Search and Unemployment'', \textit{Journal of Economic Theory}%
, 7, 188-209.

\bibitem{McMillen [2006]} McMillen, D.P. and Thorsnes, P. (2006) ``Housing
Renovations and the Quantile Repeat-Sales Price Index'', \textit{Real Estate
Economics}, 34(4), 567-584.

\bibitem{Miller [78]} Miller, N.G. (1978), ``Time on the Market and Selling
Price'', \textit{Journal of the American Real Estate and Urban Economics
Association}, 6(2), 164-174.

\bibitem{Ortalo-Magne [2006]} Ortalo-Magn\'{e}, F. and Rady S. (2006),
``Housing Market Dynamics: On the Contribution of Income Shocks and Credit
Constraints'', \textit{Review of Economic Studies}, 73, 459-485.

\bibitem{Poterba [1991]} Poterba, J.M. (1991), ``House Price Dynamics: The
Role of Tax Policy and Demography'', \textit{Brookings Papers on Economic
Activity, }143-203\textit{.}

\bibitem{Resnick [92]} Resnick, I.S. (1992), \textquotedblleft Adventures in
Stochastic Processes\textquotedblright , Birkhauser Boston.

\bibitem{Sarr [88]} Sarr, T. (1988), ``A Note on Optimal Price Cutting
Behavior under Demand Uncertainty'', \textit{The Review of Economics and
Statistics}, 70(2), 336-339.

\bibitem{Stein [1995]} Stein, J.C. (1995), ``Prices and Trading Volume in
the Housing Market: A Model with Down-Payment Effects'', \textit{Quarterly
Journal of Economics, }110(2), 379-406.

\bibitem{Taylor [99]} Taylor, C. R. (1999), ``Time-on-the-Market as a Sign
of Quality'', \textit{Review of Economic Studies}, 66, 555-578.

\bibitem{wheaton [90]} Wheaton, W.C. (1990), \textquotedblleft Vacany,
search, and prices in a housing market matching model\textquotedblright ,
\textit{Journal of Political Economy}, 61, 1270-1292.

\bibitem{william [95]} Williams, J.T. (1995), \textquotedblleft Pricing real
assets with costly search\textquotedblright , \textit{Review of Financial
Studies}, 8, 55-90.

\bibitem{yavas [92]} Yavas, A. (1992), ``A Simple Search and Barganing Model
of Real Estate Markets'', \textit{Journal of the American Real Estate and
Urban Economics Association}, 20(4), 533-548.
\end{thebibliography}
\end{document}